\documentclass{article}
\usepackage{arxiv}
\usepackage{graphicx}%
\usepackage{multirow}%
\usepackage{amsmath,amssymb,amsfonts}%
\usepackage{amsthm}%
\usepackage{mathrsfs}%
\usepackage[title]{appendix}%
\usepackage{xcolor}%
\usepackage{textcomp}%
\usepackage{manyfoot}%
\usepackage{booktabs}%
\usepackage{algorithm}%
\usepackage{algorithmicx}%
\usepackage{algpseudocode}%
\usepackage{listings}%
\usepackage{tikz,physics,bm}

\usetikzlibrary{positioning}
\usetikzlibrary{decorations.text}
\usetikzlibrary{decorations.pathmorphing}
\usepackage[short,c2]{optidef}

\usepackage{empheq}
\usepackage{cases}
\newtheorem{theorem}{Theorem}
\newtheorem{proposition}[theorem]{Proposition}%
\newtheorem{lemma}[theorem]{Lemma}%

\newtheorem{example}{Example}%
\newtheorem{remark}{Remark}%

\title{Distribution of Age of Information \\ in the Erlang Loss System}

\author{
    Nail~Akar\\
	Electrical and Electronics Engineering Dept.\\
	Bilkent University \\
	Ankara, Turkey \\
	\texttt{akar@ee.bilkent.edu.tr} \\
    \And
    Sennur~Ulukus\\
    Electrical and Computer Engineering Dept. \\
    University of Maryland \\
    College Park, MD, USA \\
    \texttt{ulukus@umd.edu} 
	 }

\begin{document}
\maketitle

\begin{abstract} In this paper, we study the exact distributions of the age of information (AoI) and peak AoI (PAoI) in a bufferless setting in which time-stamped updates, or processed tasks, generated by one or several information sources according to a Poisson process for each source, are submitted to a shared pool of $c$ homogeneous servers with exponentially distributed service times, i.e., the so-called M/M/c/c or the Erlang loss system.
We consider three update management policies which come into play when a new update arrives to find all the servers busy: the arriving update is blocked (non-preemptive, NP) as in the Erlang loss system, or it preempts a randomly chosen update of its own source in service (preempt at random, PR), or the stalest such update (preempt the stalest, PS). All three policies preserve the same birth-death structure of the server occupancy process associated with the underlying Erlang loss system. However, their AoI and PAoI distributions can be very different.
The approach we take is the absorbing Markov chain (AMC) method, in which a single AoI cycle, rather than the entire sample path of the system, is modeled by an absorbing Markov chain. Via the AMC method, we derive the exact distributions of AoI and PAoI in matrix-exponential form.
The method extends to multiple sources sharing the same pool of servers, with the AoI of a given source being affected by the remaining sources only through their aggregate update rate.
Numerical examples illustrate the implications of our findings, including distribution-based server provisioning under age violation constraints.\end{abstract}

\keywords{Age of information, peak age of information, Erlang loss system, preemption, absorbing Markov chain, matrix-exponential distribution, phase-type distribution.}



\maketitle

\section{Introduction}
\label{sec:intro}
In a growing range of applications, a monitor makes real-time inferences or control decisions based on the outcomes of {\em processed tasks} which are time-stamped units of work, such as sensor readings to be fused, camera frames to be analyzed by an inference engine, or measurements to be validated.
These tasks are submitted to a pool of processing resources, with the results used by the monitor upon completion of processing \cite{li_etal_iotj21,xu_etal_infocom22,abbas_survey23}. 
In our setting, what matters to the monitor in these applications is not necessarily how long an individual task spends in the system, but rather how {\em fresh} the most recently delivered result is. Hence, a computed result is informative as long as no result carrying a more recent time stamp has been delivered. The same concern with freshness is also associated with {\em status update systems} in which information sources send time-stamped updates to remote monitors over communication links or networks introducing random delays and losses \cite{elmagid_commag19,RoyYates__AgeOfInfo_Survey}. In both settings, conventional network performance metrics, such as delay, loss, and throughput, fail to capture timeliness as perceived by the monitor, a shortcoming that has led to the emergence of several novel information freshness metrics \cite{RoyYates__AgeOfInfo_Survey}. In this paper, as the information freshness criterion, we employ the age of information (AoI), first defined in \cite{kaul_etal_infocom12}, which is a random process whose value at time $t$ is the elapsed time, or age, since the generation time of the freshest update received by the monitor up to time $t$. Particularly, the AoI process is composed of AoI cycles. Within a cycle, the process increases with unit slope, and at the end of the cycle, i.e., upon the reception of a fresher update, the process experiences a downward jump. The peak AoI (PAoI) process is constructed by sampling the peak value attained in each cycle. One particular feature of the AoI concept is its unawareness of the internal dynamics of the information source, making it a convenient tool for the design of update systems without requiring a statistical characterization of the source. There are also freshness metrics that actually account for source dynamics including binary freshness \cite{akar_ulukus_tcom24,Sennur__BF_InfoFreshnessInCacheUpdating}, age of incorrect information \cite{maatouk2020,kam2020age}, and age of synchronization \cite{zhong_etal_isit18,AoS_2__SchedulingToMinimizeAoSInWireless}; see \cite{RoyYates__AgeOfInfo_Survey,kosta_etal_survey,abbas_survey23} for surveys of AoI modeling, analysis, and optimization.

\begin{figure}[tb]
    \centering
    \begin{tikzpicture}[>=stealth,font=\small]
    \node[draw, rounded corners, minimum width=1.7cm, minimum height=0.55cm] (s1) at (0,1.8) {source-$1$};
    \node at (0,1.05) {$\vdots$};
    \node[draw, rounded corners, minimum width=1.7cm, minimum height=0.55cm] (sN) at (0,0.2) {source-$N$};
    \coordinate (j) at (2.6,1.0);
    \draw[->] (s1.east) -- node[above,sloped] {\scriptsize $\lambda_1$} (j);
    \draw[->] (sN.east) -- node[below,sloped] {\scriptsize $\lambda_N$} (j);
    \node[draw, circle, minimum size=0.7cm] (c1) at (4.6,2.2) {\scriptsize $\mu$};
    \node[draw, circle, minimum size=0.7cm] (c2) at (4.6,1.2) {\scriptsize $\mu$};
    \node at (4.6,0.55) {$\vdots$};
    \node[draw, circle, minimum size=0.7cm] (cc) at (4.6,-0.3) {\scriptsize $\mu$};
    \draw[->] (j) -- (c1.west);
    \draw[->] (j) -- (c2.west);
    \draw[->] (j) -- (cc.west);
    \draw[dashed, rounded corners, gray] (3.9,-0.85) rectangle (5.3,2.75);
    \node at (4.6,3.05) {\scriptsize pool of $c$ homogeneous servers};
    \node[draw, rounded corners, minimum width=1.6cm, minimum height=0.9cm] (m) at (7.6,1.0) {monitor};
    \draw[->] (c1.east) -- (m.150);
    \draw[->] (c2.east) -- (m.west);
    \draw[->] (cc.east) -- (m.210);
    \draw[->, dotted] (j) -- (2.6,-0.9) node[below] {\scriptsize blocking or preemption when the pool is full};
    \end{tikzpicture}
    \caption{The multi-source bufferless update system studied in this paper. Source-$n$ generates time-stamped tasks according to a Poisson process with intensity $\lambda_n$. Each admitted update immediately starts service at one of the $c$ homogeneous servers with exponentially distributed service times, and the result is delivered to a monitor as long as it is the freshest result. An update arriving to a full pool is blocked (NP), or preempts an update of its own source in service (PR, PS).}
    \label{fig:system}
\end{figure}

In this paper, we study the AoI and PAoI processes in the multi-server bufferless system depicted in Fig.~\ref{fig:system} in which updates arrive according to a Poisson process, an arriving update immediately starts service at one of $c$ homogeneous servers upon finding the system nonfull with exponentially distributed service times, and there is no waiting room. This system is known as the Erlang loss system in queueing theory \cite{erlang,book_kleinrock,book_grossAndharris} in which the number of busy servers increases by one upon an admitted arrival and decreases by one upon a service completion, making the system occupancy process a chain of birth-death type, and when arrivals finding all servers busy are blocked, the blocking probability is given by the Erlang-B formula \cite{book_kleinrock},\cite{book_grossAndharris}. We consider three update management policies which decide how to treat an arriving update when all the servers are busy. In the {\em non-preemptive} (NP) policy, the arriving update is blocked and not served at all, as in the original Erlang loss system. In the {\em preempt at random} (PR) policy, the arriving update starts receiving service by preempting one of the updates in service chosen uniformly at random, whereas the {\em preempt the stalest} (PS) policy preempts the particular server carrying the stalest update in the system. All three policies preserve the birth-death evolution of the occupancy process. Consequently, the probability of finding all servers busy is the same for these three policies and is given by the Erlang-B formula. Nevertheless, as we will show, the corresponding AoI and PAoI distributions of the three policies can be very different, as are their implementation complexities. The goal of this paper is the derivation of the {\em exact distributions} of AoI and PAoI under the three studied policies, for a single source as well as for multiple sources sharing the same pool of servers, in the latter case preemptions are restricted to updates of the same source, called self-preemption.

Since the development of the M/M/1 model in \cite{kaul_etal_infocom12} for the analysis of AoI in a single-source single-server scenario, most existing work makes use of queueing models to analyze AoI, where status updates are associated with update arrivals, and link or processing delays are tied to update service times \cite{RoyYates__AgeOfInfo_Survey}. Existing queueing models have targeted i) single-source or multi-source scenarios, ii) buffered or bufferless settings, iii) single-server and multi-server settings, and iv) single-hop and multi-hop scenarios, with various buffer management and/or scheduling mechanisms. Three methods stand out for the queueing analysis of AoI. The graph-based approach \cite{kaul_etal_infocom12} enables the calculation of the average AoI using graphical techniques in relatively simple systems. The systematic framework of stochastic hybrid systems (SHS) is proposed in \cite{yates_kaul_TIT19} to obtain the average AoI in more advanced systems where graphical methods fall short, and has been extended to yield the moments and the moment generating function of AoI in multi-source settings \cite{yates_mgf,moltafet_etal_tcom22}. The third method is the absorbing Markov chain (AMC) method that we build upon in this paper, first proposed in \cite{akar_gamgam_comlet23} to derive the distribution of AoI in a multi-source single-server scenario for both generate-at-will (GAW) and random arrival (RA) settings.

While the majority of existing work focuses on the average AoI, or on time averages of functions of the age \cite{tripathi_modiano_TNET}, a smaller body of work targets the exact {\em distribution} of the age, which is required whenever age violation probabilities, higher moments, or time averages of general functions of age are of interest. For single-server systems, \cite{inoue_etal_tit19} obtains a general formula for the stationary distribution of AoI and specializes it to a wide range of single-server queues and \cite{kesidis_etal_questa20} derives the distributions of AoI performance measures in bufferless update processing systems by means of the Palm inversion formula.  A follow-up Markov-renewal treatment of small-buffer systems in done in \cite{kesidis_etal_questa23} and related open problems are posed in \cite{kesidis_etal_questa22}. The stationary distribution of AoI in a discrete-time queue is obtained in \cite{kosta_etal_jsac21} whereas the distributions of age processes in networks of preemptive memoryless servers are studied in \cite{yates2020age}. In a separate work, the Laplace-Stieltjes transform of the AoI of a tagged stream sharing a single-server with Poisson background traffic is derived in \cite{inoue_mandjes_tit25}. In a recent work, single-server models in which updates enter the system with a nonzero initial age are investigated in \cite{miguelez_etal_arxiv25}. These distributional results, however, almost exclusively concern single-server systems. For general multi-server update systems, exact results have generally been limited to the average AoI, and the present paper aims to fill this gap.

A prominent mechanism to improve information freshness is to use server diversity, for which information sources use multiple-servers for submitting their updates. For related work on bufferless multi-server RA scenarios, the authors of \cite{kam_etal_TIT16} derive an expression for the average age for the single-source bufferless M/M/$\infty$ model, which can be viewed as the limiting behavior of the NP policy as $c \rightarrow \infty$. The reference \cite{yates_isit18} studies the average AoI for a single-source using a finite number of homogeneous servers by the SHS technique, and also the case of two heterogeneous servers; the update management scheme of \cite{yates_isit18} is identical to PS except that all obsolete updates are immediately discarded from their servers upon each service completion, a scheme which departs from the birth-death structure of the Erlang loss system studied here; we defer a detailed discussion of the consequences of this difference to Remark~\ref{rem:discard}. In \cite{javani_etal_globecom19}, the authors study a model where updates are directed towards the servers without taking their occupancies into account, and derive the average AoI using SHS for a single-source with two heterogeneous servers, as well as for a general number of sources with either two or three homogeneous servers; the extended version \cite{javani_etal_arxiv21} further provides a closed-form expression for the average AoI for a single-source and an arbitrary number of homogeneous servers under a per-server last-come-first-served policy with preemption, together with a recursive algorithm for the average AoI in the multiple-source multiple-server case. In contrast with \cite{javani_etal_globecom19,javani_etal_arxiv21}, in the present paper, the servers form a shared pool with pool-level admission and preemption decisions, and we obtain the exact distributions, rather than the averages, of both AoI and PAoI. On the scheduling side, \cite{bedewy_etal_tit19} establishes optimality properties of last-generated-first-served-type policies in multi-server update systems. 

In the domain of GAW multi-server systems, the authors of \cite{chen_etal_comlet23} use the SHS technique to obtain the average AoI for a heterogeneous dual-server system with exponentially distributed service times where obsolete packets stay in the system until they are received by the monitor. In \cite{chen_etal_twc23}, the authors extend their work to the case where one of the service times is deterministic, deriving closed-form expressions for the average AoI and average PAoI. The authors of \cite{akar_ulukus_tcom25} studies the distribution of both PAoI and AoI in heterogeneous dual-server GAW systems employing obsolete packet management either at the source by preemption, or at the monitor by discarding, when parallel/non-parallel transmission policies are used.

Our contributions in the current paper are summarized as follows:
\begin{itemize}
\item We present the AMC method in a self-contained manner aimed at the queueing community in the context of multi-server systems. The standard practice of modeling the sample-path of the underlying system by a recurrent Markov chain does not directly apply to the AoI process. Instead, the AMC method  models a {\em single AoI cycle} by an absorbing Markov chain and obtains the stationary laws of AoI and PAoI by Palm inversion arguments which is made precise in Lemma~\ref{lem:cycle}, upon which the proofs of our main theorems rest.
\item We derive the exact distributions of AoI and PAoI in single-source multi-server Erlang loss systems under the NP, PR, and PS update management policies. Both distributions are shown to be of matrix-exponential form.
The distributions are further shown to be of phase type through an explicit similarity transformation in Proposition~\ref{prop:phrep}. Obtaining the distributions enables the computation of time averages of arbitrary functions of the age, as well as age violation probabilities, in contrast to the majority of the existing multi-server literature which focuses on the average AoI/PAoI only.
\item Although the three policies are governed by the same birth-death occupancy process, and hence share the same Erlang-B blocking probability, their age distributions can differ substantially. This distinction is invisible to any occupancy-level analysis and is revealed only via age-level analysis proposed in this paper.
\item We extend the single-source analysis to $N$ sources sharing the pool of $c$ servers with self-preemption. A key structural property enabling this extension is that, under all three policies, a tagged source is affected by the remaining sources only through their aggregate Poisson update rate, so that the size of the state space is $O(c^3)$ regardless of the number of sources.
\end{itemize}

The paper is organized as follows. Section~\ref{sec:prel} presents the preliminaries on absorbing Markov chains (AMC) and matrix-exponential distributions. In Section~\ref{sec:systemmodel}, the system model and the update management policies are described. Section~\ref{sec:analysis} first describes the AMC method followed by the development of the analytical models yielding the distributions of AoI and PAoI for the three policies, the extension to the analysis to multiple-sources and  discussions on complexity, insensitivity, and extensions to phase-type service times. Comparative evaluation of the proposed policies is presented in Section~\ref{sec:numerical}. Conclusions, open problems, and future research directions are presented in Section~\ref{sec:conclusions}.

\section{Preliminaries}
\label{sec:prel}
\subsection{Notation}
We first present notation. Uppercase bold letters are used to denote real-valued matrices whereas uppercase letters denote random variables.
Lowercase bold (plain) letters or symbols, are used to denote real-valued vectors (scalars).
The $(i,j)^{\text{th}}$ element of a matrix $\bm{A}$ is denoted by $A_{i,j}$. The $j^{\text{th}}$ 
column of $\bm B$ is denoted by $\bm B(:,j)$ and the $i^{\text{th}}$ element of a row or column vector $\bm{a}$ is denoted by $a_i$. The spectral abscissa of the matrix $\bm A$ is the greatest real part among all of its eigenvalues, denoted by $\alpha(\bm A)$.
The notations $\bm{0}_{m \times n} $, ${\bm I_m}$, and ${\bm 1_m}$ denote a matrix of zeros of size $m \times n$, an identity matrix of size $m$, and a column vector of
ones of size $m$, respectively. When used without a subscript, size information is inferred from the context. 
\subsection{Continuous-time Absorbing Markov Chains}
Consider a continuous-time Markov chain (CTMC) $Z(t) \in \{1,2,\ldots,U+V \}, \ t \geq 0$, with $U \geq 1$ transient states and $V \geq 1$ absorbing states with an infinitesimal generator matrix $\bm{Q}$ with the following partitioning,
\begin{align}
\bm{Q} & = 
    \begin{bmatrix}
    \bm A & \bm B \\
    \bm 0 & \bm 0
    \end{bmatrix}, \label{generator}
\end{align}
where $\bm A$, called the transient matrix of the AMC, is a square matrix of size $U$ whose off-diagonal elements are composed of the transition rates among the transient states, whereas the diagonal entries of $\bm A$ are non-negative to make the row sums of $\bm Q$ zero. Moreover, $\bm A$ is a Hurwitz matrix, i.e., all the eigenvalues of $\bm A$ have strictly negative real parts. On the other hand, $\bm B$, called the exit-rate matrix, is a $U \times V$ matrix, comprises the transition rates from the transient states to the absorbing states.
The CTMC $Z(t)$ has the initial probability vector of size $1 \times U$ denoted by $\bm{\beta}$, 
\begin{align}
  \bm{\beta} & = \begin{bmatrix}
      \beta_1 & \cdots & \beta_U 
  \end{bmatrix},  \enspace \beta_i  = \mathbb{P} (Z(0) = i), \enspace  i=1,\ldots,U. \label{initialcondition}
\end{align}
The probability that we are at absorbing state $v$, $v=1,2,\ldots,V$, at time $t$ is given by,
\begin{align}
    \mathbb{P}(Z(t)=U+v) & = \bm{\beta} {({\mathrm e}^{\bm{A}t} - \bm{I})}\bm{A}^{-1}\bm{B}(:,v). \label{absorptionprobabilitiesattimet}
\end{align}
The probability that absorption occurs at absorbing state $v$, denoted by $p_v$, is given by \cite{kemeny1960finite},
\begin{align}
  p_v & = \mathbb{P}(Z(\infty)=U+v) = -\bm{\beta} \bm{A}^{-1} {\bm B}(:,v).  \label{absorptionprobabilitiesSS}
\end{align}
We write the transient probability vector of the AMC $Z(t)$ of size $1 \times U$ at time $t$ as follows,
\begin{align}
   \bm{z}(t) & = \begin{bmatrix} z_1(t) & \cdots & z_U(t) \end{bmatrix}, \enspace z_{i}(t)  = \mathbb{P} (Z(t)=i), \label{xt}
\end{align} 
for which the following closed-form expression holds,
\begin{align} \bm{z}(t) & = \bm{\beta} {\rm e}^{\bm{A}t}, \ t \geq 0. \label{ztclosed} \end{align} 
The absorption time $T$ of the CTMC $Z(t)$ has a cumulative distribution function (cdf) $F_T(t)= \mathbb{P} (T \leq t) $ and probability density function (pdf) $f_T(t)$, which can be written for $t \geq 0$,
\begin{align}
F_T(t) & = 1 - \bm{z}(t) \bm{1} = 1 - \bm{\beta} {\rm e}^{\bm{A}t} \bm{1}, \enspace f_T(t) = \dv{F_T(t)}{t} = \bm{\beta} {\rm e}^{\bm{A}t} \bm{b},  \label{cdf1pdf1} 
\end{align}
where $\bm b= \bm B \bm 1$ is the exit-rate vector. 
In this case, the absorption time $T$ is said to possess a continuous-time phase-type (CPH) distribution with representation $(\bm{\beta},\bm{A})$ which covers several well-known distributions including the exponential, hyper-exponential, Erlang, and Coxian distributions  
as its sub-cases \cite{neuts81}.
Finally, the $j^{\text{th}}$ moment of $T$ can be written in closed form through the following expression \cite{telek_book},
\begin{align}
    \mathbb{E}[T^j] & = j! \ \bm{\beta} (-\bm{A})^{-j-1} \bm{b},  \label{moments}
\end{align}
using the identity
\begin{align}
\int_0^{\infty} x^m \mathrm{e}^{{\bm A}x} \dd{x} = m! \, (-\bm{A})^{-m-1}, \ m=0,1,\ldots, \label{hurwitz}
\end{align}
for a Hurwitz matrix $\bm A$.

More generally, a non-negative random variable is said to possess a matrix-exponential (ME) distribution with representation $(\bm{v}, \bm{M}, \bm{w})$ if its pdf can be written in the form $f(x)= \bm{v}\, \mathrm{e}^{\bm{M} x} \bm{w}$ for $x \geq 0$, where the entries of the triple $(\bm{v}, \bm{M}, \bm{w})$ need not carry a probabilistic interpretation; the class of ME distributions strictly contains the class of CPH distributions \cite{nielsen_book}. For an ME distribution, all the moments follow from \eqref{hurwitz} when $\bm{M}$ is Hurwitz, and the Laplace-Stieltjes transform (LST) is available in the rational form
\begin{align}
\int_0^{\infty} \mathrm{e}^{-\theta x} \bm{v}\, \mathrm{e}^{\bm{M}x} \bm{w} \dd{x} & = \bm{v} \left( \theta \bm{I} - \bm{M} \right)^{-1} \bm{w}. \label{melst}
\end{align}
The AoI and PAoI distributions derived in this paper will all be of ME form with explicit representations, and will further be shown to be CPH. Consequently, moments, transforms, and exponential tail decay rates are all available in closed form, and the machinery of the matrix-analytic literature \cite{neuts81,latouche1999introduction,telek_book,nielsen_book} directly applies.

\section{System Model}\label{sec:systemmodel}
We first consider an update system involving a single-source and a monitor. The source sends updates (processed tasks, or status updates) towards a monitor by receiving service from one of the $c \geq 1$ homogeneous and independent servers. The updates are generated according to a Poisson process with rate $\lambda$ and forwarded to one of the free servers on a random basis where servers serve each of the updates with  random service time which is exponentially distributed with parameter $\mu$. Let us also define the offered load $a=\lambda/\mu$ and offered load per server $\rho=a/c$.
Each update contains the time stamp of the update, i.e., the time at which the update was generated by the source.

We will now describe the AoI process and PAoI processes for this M/M/c/c system. 
Let $d_l$ denote the instant at which the $l^{\text{th}}$ successful update is received by the monitor. As soon as a update is received, the monitor checks whether the time stamp of the newly received update is larger than the largest time stamp received so far, in which case the update is deemed successful (or up to date). The update is called obsolete if the packet is not successful. When an obsolete update is received, the monitor discards the update, and the AoI process is not subject to a jump at the reception instance. 
Let $t_l$ and $u_l=d_l - t_l$ denote the time stamp and the service time of the $l^{\text{th}}$ successful update, respectively.
Fig.~\ref{fig:samplepath} presents a sample path of the AoI process $\Delta(t)$ (thick solid curve). During cycle-$l$ with duration $\Psi_l$,  $\Delta(t)$ rises from the value $u_l$ at time $d_{l}$ until the value $\Phi_l$ at time $d_{l+1}$ when it drops to $u_{l+1}$. The random process $\Phi_l, l \geq 0$ is called the PAoI process since its value at cycle-$l$ is the peak value of the AoI process during the corresponding cycle. 
On the other hand, $\Psi_l=d_{l+1} - d_l, l \geq 0$, is called the AoI cycle length process. 
Since the AoI and PAoI processes are asymptotically stationary due to random interarrival and service times,
we let $\Delta$ and $\Phi$ denote the steady-state random variables for the random process $\Delta(t)$ and $\Phi_l$, respectively, with cdf 
\begin{align}
   F_{\Delta}(x) & = \lim_{t \rightarrow \infty} \mathbb{P} ( \Delta(t) \leq x), \enspace
   F_{\Phi}(x)  = \lim_{l \rightarrow \infty} \mathbb{P} ( \Phi_l \leq x)
\end{align}
for $x \geq 0$. We also denote by $f_{\Delta}$ and $f_{\Phi}$, the pdf of the AoI and PAoI processes, respectively. $\Psi$ denotes the steady-state random variable associated with the random process $\Psi_l$.

\begin{figure}[tb]
    \centering
    \begin{tikzpicture}[scale=0.35]	
    \draw[<->,black] (9,13) -- (16,13);
    \filldraw (12.5,13) circle (0.01) node[anchor=south, thick] {$\Psi_l$};
    \draw[thick,->] (0,0) -- (23,0) node[anchor=north] {$t$};
    \draw[thick,->] (0,0) -- (0,14) node[anchor=south] {$\Delta(t)$};
    \draw[ultra thick,black] (4.5,4.5) -- (9,9);
    \draw[dashed,very thin] (4.5,4.5) -- (9,9);
    \filldraw[black] (4,4) circle (3pt);
    \filldraw[black] (3.5,3.5) circle (3pt) ;
    \filldraw[black] (3,3) circle (3pt); 
    \draw (0,5) node[anchor=east] {$u_l$};
    \draw[dotted,gray] (0,9) -- (23,9);
    \draw[dotted,gray] (0,5) -- (23,5);
    \draw[dotted,gray] (0,2) -- (23,2);
    \draw[dotted,gray] (9,12.5) -- (9,0)  node[anchor=north, thick, black] {$d_{l}$};
    \draw[ultra thick,black] (9,9) -- (9,5.2);
    \draw[ultra thick,black] (9.1,5.1) -- (16,12);
    \draw[dashed,black,very thin] (16,12) -- (4,0)  node[anchor=north, thick, black] {$t_{l}$};
    \node at (3.75,0.50) (nodeA) {};
    \node at (8.75,5.5) (nodeB) {};
    \node at (15.75,12.5) (nodeC) {};
    \draw (0,12) node[anchor=east] {$\Phi_l$};
    \draw (0,9) node[anchor=east] {$\Phi_{l-1}$};
    \draw (0,2) node[anchor=east] {$u_{l+1}$};
    \draw[dotted,gray] (16,12.5) -- (16,0)  node[anchor=north, thick, black] {$\enspace d_{l+1}$};
    \draw[dotted,gray] (0,12) -- (23,12);
    \draw[ultra thick,black] (16,12) -- (16,2.2);
    \draw[ultra thick,black] (16.1,2.1) -- (20,6);
    \draw[dashed,black,very thin] (20,6) -- (14,0)  node[anchor=north, thick, black] {${t_{l+1}} \enspace$};
    \filldraw[black] (20.5,6.5) circle (3pt);
    \filldraw[black] (21,7) circle (3pt) ;
    \filldraw[black] (21.5,7.5) circle (3pt); 
    \draw[gray] (9,5) circle (6pt);
    \draw[gray] (16,2) circle (6pt);
    \end{tikzpicture}
    \caption{A portion of a sample path of the AoI process $\Delta(t)$, PAoI process $\Phi_l$, AoI cycle length process $\Psi_l$. Only generation and reception instances of successful packets are shown.}
    \label{fig:samplepath}
\end{figure}
An incoming update arrival will immediately receive service from one of the idle servers at random. 
In case the new update arrival occurs when all the $c$ servers are busy, we propose three update management policies. In the {\em non-preemptive} (NP) policy,
an incoming update is blocked, i.e., not admitted into the system, when all servers are busy. This system is the Erlang loss system of queueing theory \cite{erlang}, since updates are either admitted with no delay, or lost,  
differentiating itself from  buffered delay or delay-loss systems where incoming updates can also be exposed to certain delays \cite{book_kleinrock},\cite{book_grossAndharris}.
For finding the blocking probability, a CTMC $Y(t),t\geq 0$ is defined where $Y(t) \in \{ 0,1,\ldots,c \}$ stands for the number of busy servers at time $t$. It is clear for this CTMC that, the transition rate from state $i$ to $i+1$ is $\lambda$ for $0 \leq i < c$, and that from $i$ to $i-1$ is $i \mu$ for $1 \leq i \leq c$.  
Solving for the steady-state probabilities of the CTMC $Y(t)$ which is of birth-death type,
the steady-state probability $y_k$ that 
$k$ servers are busy for the Erlang loss system at an arbitrary time, was first given in closed form in \cite{erlang},
\begin{align}
    y_k & = \frac{\frac{a^k}{k!}}{\sum_{i=0}^c \frac{a^i}{i!}}, \ 0 \leq k \leq c, \label{ErlangLoss}
\end{align}
and the probability that an incoming update is blocked is $y_c$ since incoming updates are blocked when all the $c$ servers are busy. Moreover, the formula for $y_c$ in \eqref{ErlangLoss} holds for any service time distribution with mean service rate $\mu$, known as the insensitivity property of the blocking probability to the shape of the distribution \cite{book_grossAndharris}. 

The second policy is the {\em preempt at random} (PR) policy for which the incoming update preempts (or replaces) one of the updates in service, in a uniformly likely manner. With PR, since the time stamp of the new update is larger than all the time stamps in service, this new update is less likely to be obsolete than the update it replaces. Moreover, due to the memoryless nature of exponentially distributed service times, the remaining service time of the ongoing update and the service time of the new update are identically distributed, which reveals that there is actually no penalty paid with preemption. 

The final policy we study is the {\em preempt the stalest} (PS) for which the update to be preempted is the one which has the earliest time stamp. PS requires the update agent to be aware of the time stamps and to be able to order them in contrast to NP and PR, whereas PR requires the capability of preemption. This is in contrast to NP which is relatively easier to implement due to the lack of preemption and requirement for time stamp awareness. When service times are exponentially distributed, the probability that all the servers are busy for all the three policies is the same and given according to \eqref{ErlangLoss}, since the underlying CTMC $Y(t)$ that describes the evolution of the number of busy servers, is the same. The goal of this paper is the derivation of the exact distributions of AoI and PAoI for the Erlang loss system under the three update management policies, namely NP, PR, and PS.

\begin{remark}[Occupancy-preserving policies vs.\ immediate discarding]
\label{rem:discard}
Under each of NP, PR, and PS, an admitted update occupies exactly one server from its admission epoch until either its own service completion or its preemption by a newly admitted update. Consequently, the occupancy process moves in unit steps, and $Y(t)$ remains the birth-death chain of the Erlang loss system under all three policies, together with the attendant insensitivity of the blocking probability. This is in contrast with the update management scheme of \cite{yates_isit18}, in which every obsolete update is immediately discarded from its server the moment any update completes service in which case the occupancy may then decrease by more than one at a time, and the Erlang loss structure is lost. 
We nevertheless deliberately restrict our attention to the occupancy-preserving policies NP, PR, and PS, because they are precisely the policies for which the multi-source extension of Section~\ref{sec:multisource} is made possible for which a tagged source will then be affected by the remaining sources only through their aggregate Poisson update rate. Under immediate discarding, the number of servers occupied by each individual source would also need to be tracked, with the state space growing rapidly with the number of sources. The exact analysis of immediate discarding policies in the multi-source multi-server setting remains an open problem.
\end{remark}

In the multi-source version of the problem, there are $N \geq 1$ information sources submitting their updates to the same pool of $c$ homogeneous servers. Source-$n$ sends updates according to a Poisson process with intensity $\lambda_n$. The management scheme for an incoming update for source-$n$ is the same as the single-source case, but with self-preemption only, when all the servers are busy, i.e., the incoming update from source-$n$ can only preempt a source-$n$ update in service in the PR and PS policies upon finding the pool full.

\section{Analytical Model}
\label{sec:analysis}
\subsection{The AMC method}
\label{sec:amcmethod}
The standard practice in queueing theory for computing a stationary performance measure is to model the sample-path of the system by a recurrent (ergodic) Markov chain, and to obtain the measure of interest from the stationary distribution of this chain \cite{book_kleinrock,book_grossAndharris}. However, this practice cannot be used alone for the AoI process since $\Delta(t)$ is not a function of a finite system state, e.g., the occupancy $Y(t)$, since it also depends on the time stamps of the previously received updates. 
For AoI-related problems, one systematic approach is the SHS framework described in detail in \cite{yates_kaul_TIT19,yates_mgf} which has proven very effective for the computation of the mean, moments, and the moment generating function of the age, whereas the derivation of the distribution itself calls for different tools; see \cite{inoue_etal_tit19,kesidis_etal_questa20,kesidis_etal_questa23} for single-server systems.
On the other hand, the AMC method, first introduced in \cite{akar_gamgam_comlet23} and also adopted in this paper, takes a different route. In AMC, instead of the entire sample-path of the system, {\em a single AoI cycle} is modeled, by an {\em absorbing}, rather than a recurrent, Markov chain. Recall from Section~\ref{sec:systemmodel} and Fig.~\ref{fig:samplepath} that the sample-path of the AoI process decomposes into cycles delimited by the reception epochs $d_l$ of the successful updates. In particular, cycle-$l$ starts at $d_l$ with the age dropping to the value $u_l$, and ends at $d_{l+1}$ with the age attaining the peak value $\Phi_l$. The AMC method is based on the following three key ideas.

\paragraph{{\em Tagging:}} Every successful update opens exactly one AoI cycle. We therefore tag an arbitrary update $P^{\ast}$ admitted into service, and track the system from the admission epoch of $P^{\ast}$ onwards only through the finitely many quantities relevant to the fate of $P^{\ast}$: whether $P^{\ast}$ is still in service, and how many of the in-service updates carry time stamps earlier or later, than that of $P^{\ast}$. Owing to the Poisson arrivals and the memoryless service times, these quantities evolve as a finite-state continuous-time Markov chain.

\paragraph{{\em Two-stage AMC structure with two absorbing states:}} The tagged update $P^{\ast}$ may eventually be unsuccessful, i.e., preempted while in service (possible under PR and PS), or found obsolete at its own service completion (possible under all three policies), which we model by absorption into the state $u$.  Alternatively, $P^{\ast}$ may eventually be successful (service completes without becoming obsolete) in which case its reception by the monitor opens an AoI cycle, and we let the chain transition from the first stage (subset of transient states $P^{\ast}$ is present in the system) to the second stage (subset of transient states $P^{\ast}$ is not present any more in the system) which lasts until the reception of the first update fresher than $P^{\ast}$, necessarily the next successful update, closing the cycle, which we model by absorption into the state $s$. Specifically, we launch the clock of the AMC at the {\em generation} epoch of $P^{\ast}$. In this way, at the reception of $P^{\ast}$, the elapsed time of the AMC and the AoI both equal the service time of $P^{\ast}$, and both subsequently grow at unit rate until the cycle ends. Consequently, on successful realizations, the absorption time of the AMC equals the PAoI of the cycle opened by $P^{\ast}$, and the set of age values visited during that cycle coincides with the set of elapsed-time points at which the AMC resides in its second stage. 
In this manner, although the age process itself takes values in an unbounded continuous range, its behavior within a cycle is completely captured by which states of a finite absorbing chain are occupied and the computation of its stationary law reduces to elementary transient analysis, i.e., to matrix exponentials (MEs) of the transient matrix $\bm A$.

\paragraph{{\em PASTA property:}} Through the PASTA property \cite{book_grossAndharris}, the distribution of the system configuration found by $P^{\ast}$ at its admission epoch is the time-stationary distribution of the configuration, which yields the initial probability vector $\bm{\beta}$ of the AMC in closed-form in terms of the Erlang loss probabilities \eqref{ErlangLoss}.

These observations are made precise in Lemma~\ref{lem:cycle} below which, combined with the Palm inversion formula, yields the distributions of both PAoI (Theorem~\ref{thm:PAoI}) and AoI (Theorem~\ref{thm:AoI}) in ME form, once the generator of the AMC $X(t)$ is obtained as in \eqref{generator} together with its initial probability vector as in \eqref{initialcondition}. The construction is first developed in detail for NP; subsequently, the required modifications are elaborated for the other two policies PR and PS, and the extension to multiple sources is presented in Section~\ref{sec:multisource}.
\subsection{Non-preemptive Policy}
\begin{table}[t]
\caption{The state space for the AMC $X(t)$}
\centering
\begin{tabular}{|c|c|} 
 \hline
 State & Description \\
 \hline \hline
 $(i,j,1)$ & $0 \leq i,j \leq c-1, 0 \leq i + j < c$ \\
 \hline
$(i,j,2)$ & $0 \leq i \leq c-1, 0 \leq j  \leq c ,
0 \leq i + j \leq c$ \\
 \hline
 $u$ & Unsuccessful absorbing state  \\
 \hline
 $s$ & Successful absorbing state  \\ \hline 
\end{tabular}
\label{tab:states}
\end{table}
The state space of the problem we propose to use is given in Table~\ref{tab:states}.
The subset of transient states $\mathcal{L}_1 = \{ (i,j,1) \}, 0 \leq i,j \leq c-1, 0 \leq i + j < c$ constitute the first stage of the AMC when the specific update $P^{\ast}$ is in the system and is a legitimate candidate for successful reception, i.e., has not become obsolete yet, and there are $i$ (resp. $j$) updates in service with earlier (resp. later) time stamps than the time stamp of $P^{\ast}$. The condition $0 \leq i + j < c$ should hold since there can be at most $c$ busy servers together with $P^{\ast}$, 
when $P^{\ast}$ is in the system. Moreover, there are $L_1=c(c+1)/2$ states in the first stage.
On the other hand, for the subset of transient states $\mathcal{L}_2 = \{ (i,j,2) \}, 0 \leq i \leq c-1, 0 \leq j  \leq c,
0 \leq i + j \leq c$, constituting the second stage, $i$ and $j$ have the same interpretation as before, but $P^{\ast}$ has now successfully been received, and the AMC is to evolve until the next successful reception. Moreover, there are $L_2=\frac{(c+1)(c+2)}{2}-1$ states in the second stage.
The states $u$ and $s$ are the unsuccessful and successful absorbing states, respectively, indicating the eventual fate, successful or not, of the specific update $P^{\ast}$ alone, and not of any other update. 

The transition rates of the AMC $X(t)$ are given in Table~\ref{tab:transitionrates}. Let us first consider the transient state $(i,j,1)$. Under an arrival, $j$ will be incremented when the system is nonfull since its time stamp will be later than that of $P^{\ast}$. If one of the $i>0$ servers gets to complete, which occurs with rate $i \mu$, $i$ will be decremented. On the other hand, if one of the $j > 0$ servers completes its service, then $P^{\ast}$ becomes obsolete, i.e., it will eventually be discarded by the monitor, since an update with a later time stamp has been received by the monitor which makes $P^{\ast}$ obsolete. In this case, we let the AMC $X(t)$ absorb into the unsuccessful absorbing state $u$. When the service of the update $P^{\ast}$ is over, which occurs with rate $\mu$, we transition into $(i,j,2)$ since $P^{\ast}$ has successfully departed the system and the AMC needs to evolve until the next successful reception.
When $X(t)$ is in state $(i,j,2)$, $j$ will be incremented 
with a new update arrival which occurs with rate $\lambda$. If one of the $i>0$ servers completes its service, which occurs with rate $i \mu$, $i$ will be decremented. If one of the $j > 0$ servers completes its service, then the successful absorbing state $s$ is reached, since all the $j$ updates have been generated later than the generation time of the last successful packet, i.e., $P^{\ast}$. 
\begin{table}[t]
\caption{Transition rates for $X(t)$ for NP}
\centering
\begin{tabular}{|c|c|c|c|} 
 \hline 
 From & To  & Condition & Value  \\ 
 \hline
 $(i,j,1)$ & $(i,j+1,1)$ & $i + j < c-1$ & $\lambda$   \\ \cline{2-4}
               & $(i-1,j,1)$ & $i > 0$ & $i \mu$ \\ \cline{2-4} 
               & $u$ &  $j>0$ & $j \mu$ \\ \cline{2-4} 
               & $(i,j,2)$  & & $\mu$ \\ 
                 \hline
$(i,j,2)$ & $(i,j+1,2)$ &  $i + j < c$ & $\lambda$   \\ \cline{2-4} 
              & $(i-1,j,2)$ & $i > 0$ & $i \mu$ \\ \cline{2-4}
              & $s$ & & $j \mu$ \\
              \hline
\end{tabular}
\label{tab:transitionrates}
\end{table}

There are $L=L_1+L_2$ transient states and two absorbing states in the characterization of the AMC $X(t)$. We first deploy a suitable enumeration of the transient  states in Table~\ref{tab:states} while placing the absorbing states $u$ and $s$ at the end of the list of states, in this order. The enumeration process can be abstracted by a function $w(\cdot)$ which maps a joint state $(i,j,m)$ to an index $l, 1 \leq l \leq L$, i.e., $l=w(i,j,m)$, 
or equivalently, we use the notation $l=i|j|m$ instead.
In this setting, the states $u$ and $s$ are mapped to indexes $L+1$ and $L+2$, respectively. 
With a certain choice of the function $w(\cdot)$,  we obtain the generator matrix $\bm Q$ of the AMC $X(t)$ in the form \eqref{generator} where $\bm A$ is a square matrix of size $L$ and $\bm B$ is a $L \times 2$ matrix. Now the initial probability vector $\bm{\beta}$ of size $1 \times L$ needs to be obtained for the AMC $X(t)$. First, we observe that $P^{\ast}$ can start service only in the transient state 
$(i,0,1)$ for $0 \leq i \leq c-1$. Moreover, the probability that a new arrival can start service is $1-y_c$ where $y_c$ is the Erlang loss probability given in \eqref{ErlangLoss}. Consequently, the non-zero entries of the vector $\bm{\beta}$ can be written as,
\begin{align}
    \beta_{i|0|1} & = \frac{y_i}{1-y_c}, \ 0 \leq i \leq c-1,
\end{align}
where $\beta_{i|0|1}$ is the probability of finding $i$ busy servers conditioned on a nonfull system of $c$ servers and $y_i$ is given in \eqref{ErlangLoss}. At this point, we have completed the characterization of the AMC $X(t)$ through its transient matrix $\bm A$, exit-rate matrix $\bm B$, along with the initial probability vector $\bm{\beta}$. For convenience, we denote the second column of $\bm B$, the exit-rate vector corresponding to the successful absorbing state $s$, namely $\bm{B}(:,2)$, by $\bm s$. 
Next, the distributions of the AoI and PAoI processes will be given in closed-form using this characterization. The following lemma makes precise the reduction, outlined in Section~\ref{sec:amcmethod}, of the age process to the transient analysis of the AMC $X(t)$, and constitutes the backbone of Theorems~\ref{thm:PAoI} and \ref{thm:AoI}. Recall that the realizations of $X(t)$ being absorbed into $s$ are termed successful realizations, the remaining ones, absorbed into $u$, are termed unsuccessful realizations, and $T$ denotes the absorption time of $X(t)$.
\begin{lemma}
\label{lem:cycle}
Let $X(t)$ be the AMC characterized above with transient matrix $\bm A$, exit-rate matrix $\bm B$, initial probability vector $\bm{\beta}$,  $\bm s = \bm B(:,2)$ and absorption time $T$, launched at the generation epoch of an arbitrary tagged update $P^{\ast}$ admitted into service. Then,
\begin{enumerate}
\item[(i)] $X(t)$ is absorbed into $s$ if and only if $P^{\ast}$ is a successful update; moreover, the state $u$ is unreachable from the subset $\mathcal{L}_2$, so that $\{ X(x) \in \mathcal{L}_2 \} \subseteq \{ X(\infty) = s \}$ for every $x \geq 0$.
\item[(ii)] On successful realizations, if $P^{\ast}$ is the $l^{\text{th}}$ successful update, then $T = \Phi_l$, and for every $x \geq 0$, $X(x) \in \mathcal{L}_2$ holds if and only if $u_l \leq x < \Phi_l$, i.e., if and only if the age value $x$ is visited during cycle-$l$.
\item[(iii)] The steady-state PAoI satisfies $F_{\Phi}(x) = \mathbb{P} ( T \leq x \mid X(\infty) = s )$, and for every bounded measurable function $\varphi$, the steady-state AoI satisfies
\begin{align}
\mathbb{E} [\varphi(\Delta)] & = \frac{1}{\mathbb{E}[\Psi]} \int_0^{\infty} \varphi(x) \, \mathbb{P} ( X(x) \in \mathcal{L}_2 \mid X(\infty) = s ) \dd{x}, \label{palm}
\end{align}
where $\mathbb{E}[\Psi] = \int_0^{\infty} \mathbb{P} ( X(x) \in \mathcal{L}_2 \mid X(\infty) = s ) \dd{x}$ is the mean AoI cycle length.
\end{enumerate}
\end{lemma}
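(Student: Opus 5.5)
The plan is a pathwise coupling between the system and the AMC for parts (i) and (ii), followed by a Palm/renewal-reward argument for part (iii).

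\textbf{Coupling.} Fix an admitted tagged update $P^{\ast}$. Under NP there is no waiting, so its generation epoch is its admission epoch. For every $x\ge 0$ after that epoch, define $X(x)$ as a function of the system configuration:
\begin{itemize}
\item while $P^{\ast}$ is in service and no update with a later time stamp has yet completed, set $X(x)=(i,j,1)$, where $i$ and $j$ are the numbers of in-service updates stamped earlier and later than $P^{\ast}$;
\item after $P^{\ast}$ is received, set $X(x)=(i,j,2)$ until the first completion of a later-stamped update;
\item otherwise the chain is absorbed.
\end{itemize}
By Poisson arrivals and memoryless service, each configuration change is one of the events in Table~\ref{tab:transitionrates}: an admitted arrival with rate $\lambda$ when the pool is nonfull, a completion among the $i$ earlier updates with rate $i\mu$, a completion among the $j$ later updates with rate $j\mu$, or completion of $P^{\ast}$ with rate $\mu$. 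These rates depend only on $(i,j,m)$. Hence $X$ is the CTMC with generator \eqref{generator}. By PASTA, its initial law is $\bm\beta$ once we condition on admission.

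\textbf{Parts (i) and (ii).} For (i), note three facts.
\begin{itemize}
\item Every update already received when $P^{\ast}$ is generated carries an earlier stamp.
\item Completions of earlier-stamped updates never make $P^{\ast}$ obsolete.
\item $P^{\ast}$ is obsolete at its reception if and only if some later-stamped update completed first, which is exactly the $j\mu$ transition to $u$ in stage~1.
\end{itemize}
So entering $\mathcal{L}_2$ is equivalent to $P^{\ast}$ being successful. Table~\ref{tab:transitionrates} contains no transition from $\mathcal{L}_2$ to $u$, and from $\mathcal{L}_2$ the only exit is to $s$, which the chain reaches with probability one since $\bm A$ is Hurwitz. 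This gives the inclusion in (i).

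For (ii), let $P^{\ast}$ be the $l^{\text{th}}$ successful update, so $t_l$ is its generation epoch and the AMC clock reads $x=\Delta(t_l+x)$ during cycle-$l$. Entry into $\mathcal{L}_2$ occurs at $x=u_l$. Any later completion of an earlier-stamped update is obsolete and does not end the cycle. The first completion of an update stamped later than $P^{\ast}$ is fresher than every received update, so it is the $(l+1)^{\text{st}}$ successful reception. This includes updates that arrive after $P^{\ast}$ departs, since these are counted in $j$. That completion is precisely the $j\mu$ jump to $s$ at $x=d_{l+1}-t_l=\Phi_l$. Hence $T=\Phi_l$, and $X(x)\in\mathcal{L}_2$ if and only if $u_l\le x<\Phi_l$.

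\textbf{Part (iii).} Let $\mathcal{P}$ be the Palm probability with respect to admitted updates. The pair consisting of the sequence of admitted updates and the occupancy chain $Y$ is stationary and ergodic, since $Y$ is a finite irreducible CTMC driven by Poisson input. Under $\mathcal{P}$ the AMC started at the tagged admission has law $(\bm\beta,\bm A,\bm B)$. Successful updates form a stationary thinning of the admitted updates. The Palm law with respect to this thinned sequence is therefore $\mathcal{P}$ conditioned on success, that is, on $\{X(\infty)=s\}$. Combined with (ii), the ergodic theorem for marked stationary sequences gives
\[
\lim_{L\to\infty}\frac1L\sum_{l\le L}\mathbf 1\{\Phi_l\le x\}=\mathbb{P}(T\le x\mid X(\infty)=s).
\]
For the AoI, write the time average over the first $L$ cycles as
\[
\frac{\sum_{l\le L}\int_{u_l}^{\Phi_l}\varphi(x)\dd{x}}{\sum_{l\le L}\Psi_l},
\]
using $\Psi_l=\Phi_l-u_l$. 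By the renewal-reward / Palm inversion formula this converges to
\[
\frac{\mathbb{E}_s\!\left[\int_0^\infty \varphi(x)\mathbf 1\{X(x)\in\mathcal{L}_2\}\dd{x}\right]}{\mathbb{E}_s[\Psi]},
\]
where $\mathbb{E}_s$ denotes expectation under the Palm law just described. Fubini then yields \eqref{palm}, and $\varphi\equiv 1$ identifies $\mathbb{E}[\Psi]$. Finiteness of $\mathbb{E}[\Psi]$ follows from \eqref{hurwitz}, since $\bm A$ is Hurwitz.

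\textbf{Main obstacle.} The hard step is the rigorous identification of the Palm law of successful updates with "admitted update, conditioned on absorption in $s$." Two points need care. First, one must justify using the arrival-time configuration (PASTA plus admission conditioning) rather than a configuration seen at $d_l$. Second, one must check ergodicity of the cycle sequence, which is not i.i.d. because consecutive cycles share in-service updates. I would handle both by embedding everything in the stationary ergodic marked point process of admissions and invoking the standard Palm inversion and thinning results, rather than attempting a regenerative argument.
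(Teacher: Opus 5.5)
Your proposal is correct and takes essentially the same approach as the paper. Parts (i)--(ii) come from a pathwise identification of the AMC with the fate of the tagged update, and part (iii) comes from PASTA plus Palm inversion and ergodic averaging over successful updates; your explicit thinning argument and cycle-sum ratio are a slightly more careful version of the paper's step ``averaging over the successful updates,'' and the paper glosses over the same Palm-identification subtleties you flag.
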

\begin{proof}
Items (i) and (ii) are pathwise consequences of the construction of $X(t)$. The transition from a state $(i,j,1)$ into $\mathcal{L}_2$ occurs exactly when the service of $P^{\ast}$ completes before $P^{\ast}$ has become obsolete, i.e., exactly when $P^{\ast}$ is received as a successful update, at which instant the AoI process experiences its downward jump. If $P^{\ast}$ is the $l^{\text{th}}$ successful update, this reception occurs at the epoch $d_l$, the AoI drops to the service time $u_l = d_l - t_l$ of $P^{\ast}$, and the elapsed time of the AMC, whose clock was launched at the generation epoch $t_l$ of $P^{\ast}$, equals the very same value $d_l - t_l = u_l$. From that instant onwards, both the AoI and the elapsed time of the AMC increase at unit rate. While $X(t)$ resides in $\mathcal{L}_2$, the indices $i$ and $j$ of the state $(i,j,2)$ count the in-service updates with time stamps earlier and later than $t_l$, respectively; a service completion among the $j$ later-stamped updates is necessarily a successful reception, since $P^{\ast}$ is, until that instant, the most recently generated update ever received by the monitor. This reception is precisely the epoch $d_{l+1}$ ending cycle-$l$, and it triggers absorption into $s$, whereas earlier-stamped completions and new arrivals keep $X(t)$ within $\mathcal{L}_2$. Consequently, $T = d_{l+1} - t_l = \Phi_l$, and $X(x) \in \mathcal{L}_2$ holds if and only if $u_l \leq x < \Phi_l$, proving (ii). Since no transition from $\mathcal{L}_2$ into $u$ exists (see Table~\ref{tab:transitionrates}; the same property is verified from Tables~\ref{tab:transitionratesPR} and \ref{tab:transitionratesPS} for the preemptive policies studied in the sequel), a realization visiting $\mathcal{L}_2$ at any time is necessarily a successful one, proving the second claim of (i). The first claim of (i) holds since absorption into $u$ occurs exactly when $P^{\ast}$ is preempted while in service or found obsolete at its own service completion, i.e., exactly when $P^{\ast}$ is unsuccessful.

For item (iii), consider the time-stationary and ergodic system process together with the stationary point process of admission epochs, and associate with the $n^{\text{th}}$ admitted update the system configuration found at its admission epoch. By the PASTA property, the Palm distribution of this configuration coincides with the time-stationary distribution of the configuration conditioned on admission, which is precisely the initial probability vector $\bm{\beta}$ constructed above; and by the memorylessness of the exponential service times and the Poisson arrivals, conditioned on this configuration, the subsequent evolution of the quantities relevant to the fate of the $n^{\text{th}}$ admitted update is that of the AMC $X(t)$, independently of the past. The ergodic theorem then implies that the long-run empirical distribution of the peaks $\{ \Phi_l \}$, i.e., of the absorption times collected over the successful updates, is the law of $T$ conditioned on $\{ X(\infty) = s \}$, proving the first claim of (iii). For the second claim, the Palm inversion formula (see, e.g., \cite{baccelli_bremaud}) applied to the stationary and ergodic point process $\{ d_l \}$ of successful receptions yields
\begin{align*}
\mathbb{E} [\varphi(\Delta)] & = \frac{\mathbb{E}^0 \left[ \int_{d_l}^{d_{l+1}} \varphi(\Delta(t)) \dd{t} \right]}{\mathbb{E}^0 \left[ \Psi_l \right]},
\end{align*}
where $\mathbb{E}^0$ denotes the Palm expectation with respect to $\{ d_l \}$; we also refer the reader to \cite{kesidis_etal_questa20}, where the Palm inversion formula is employed for the derivation of AoI distributions in bufferless single-server systems. Since $\Delta(t)$ increases with unit slope from $u_l$ to $\Phi_l$ within cycle-$l$, the integral in the numerator equals $\int_0^{\infty} \varphi(x) \chi_l(x) \dd{x}$, where $\chi_l(x)$ denotes the indicator of the event $\{ u_l \leq x < \Phi_l \}$, which, by item (ii), coincides with the indicator of the event $\{ X(x) \in \mathcal{L}_2 \}$ written for the AMC attached to the $l^{\text{th}}$ successful update. Averaging over the successful updates, i.e., taking the expectation with respect to the law of $X(t)$ with initial vector $\bm{\beta}$, conditioned on $\{ X(\infty) = s \}$, and interchanging the expectation and the integral by Tonelli's theorem, gives \eqref{palm}; the expression for $\mathbb{E}[\Psi]$ follows by choosing $\varphi \equiv 1$.
\end{proof}
First, let us present the distribution of the PAoI process for which Theorem~\ref{thm:PAoI} states our main result.
\begin{theorem}
\label{thm:PAoI}
Consider the AMC $X(t)$ 
characterized with the transient matrix $\bm A$, exit-rate matrix $\bm B$, the initial probability vector $\bm{\beta}$, and $\bm s=\bm{B}(:,2)$.
The steady-state PAoI $\Phi$ in the Erlang loss system with $c$ servers that deploys the NP update management policy possesses an ME distribution with pdf given by,
\begin{align}
f_{\Phi}(x)  & = \frac{\bm{\beta} \mathrm{e}^{\bm{A}x} \bm{s}}{-\bm{\beta} \bm{A}^{-1} \bm{s}},  \ x \geq 0, \label{density_paoi}
\end{align}
and the $j^{\text{th}}$ moment of $\Phi$ is written as,
\begin{align}
\mathbb{E} [\Phi^j] & = \frac{j! \ \bm{\beta} {(-\bm{A})^{-(j+1)}} \bm{s}}{-\bm{\beta} \bm{A}^{-1} \bm{s}}. 
\label{phi_twomoments}
\end{align}
\end{theorem}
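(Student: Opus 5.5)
The plan is to start from Lemma~\ref{lem:cycle}(iii), which gives $F_{\Phi}(x) = \mathbb{P}(T \leq x \mid X(\infty)=s)$. The whole task therefore reduces to computing the sub-distribution $\mathbb{P}(T \leq x, X(\infty)=s)$ and the normalizing probability $\mathbb{P}(X(\infty)=s)$ for the AMC $X(t)$ with representation $(\bm\beta,\bm A,\bm B)$. Both are standard transient quantities of an absorbing chain, already recorded in the preliminaries.

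\textbf{Step 1 (numerator).} The event $\{T \leq x, X(\infty)=s\}$ coincides with $\{X(x)=s\}$, since $s$ is absorbing. By \eqref{absorptionprobabilitiesattimet} with $v=2$,
\[
\mathbb{P}(X(x)=s) = \bm\beta(\mathrm{e}^{\bm A x}-\bm I)\bm A^{-1}\bm s .
\]
Differentiating in $x$ and using the fact that $\bm A$ commutes with $\mathrm{e}^{\bm A x}$ gives the defective density $\bm\beta \mathrm{e}^{\bm A x}\bm s$. Equivalently, the density can be read off directly from the Kolmogorov forward equation: the mass absorbed into $s$ per unit time at time $x$ is $\bm z(x)\bm s$, with $\bm z(x)$ as in \eqref{ztclosed}.

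\textbf{Step 2 (denominator).} By \eqref{absorptionprobabilitiesSS}, $\mathbb{P}(X(\infty)=s) = -\bm\beta\bm A^{-1}\bm s$, which is also the limit $x\to\infty$ of the Step~1 expression. This limit requires $\mathrm{e}^{\bm A x}\to \bm 0$, i.e., that $\bm A$ is Hurwitz.

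\textbf{Step 3 (density and moments).} Dividing the density of Step~1 by the probability of Step~2 yields \eqref{density_paoi}. This is manifestly of ME form, with representation $(\bm\beta/(-\bm\beta\bm A^{-1}\bm s), \bm A, \bm s)$. For the moments, integrate $x^j f_\Phi(x)$ and apply the identity \eqref{hurwitz} with $m=j$, which gives \eqref{phi_twomoments} immediately. Integrability is again guaranteed by $\bm A$ being Hurwitz.

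The only genuinely non-routine point, and the one I expect to be the main obstacle, is justifying that $\bm A$ is Hurwitz, together with $-\bm\beta\bm A^{-1}\bm s>0$, so that the normalization is nonzero. For Hurwitz-ness, I would argue that from every transient state in Table~\ref{tab:transitionrates} absorption is reachable:
\begin{itemize}
\item From $(i,j,1)$, the rate-$\mu$ transition leads into $\mathcal L_2$.
\item From $(i,j,2)$, the $i$ earlier-stamped updates can all depart at rates $i\mu$ until $i=0$, while arrivals can push $j\geq 1$.
\item From any state with $j\geq1$, there is a positive rate $j\mu$ into $s$.
\end{itemize}
Since every transient state reaches an absorbing state and the state space is finite, $\bm A$ is Hurwitz by the standard result for finite absorbing chains. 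Positivity of $\mathbb{P}(X(\infty)=s)$ follows by exhibiting one positive-probability path: from $(i,0,1)$, serve $P^{\ast}$ first, then admit an arrival, then complete it.
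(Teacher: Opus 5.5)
Your proposal is correct and follows essentially the same route as the paper. You condition on absorption in $s$ via Lemma~\ref{lem:cycle}(iii), identify $\{T\le x,\,X(\infty)=s\}=\{X(x)=s\}$, use \eqref{absorptionprobabilitiesattimet} and \eqref{absorptionprobabilitiesSS}, differentiate, apply \eqref{hurwitz} for the moments, and justify $-\bm\beta\bm A^{-1}\bm s>0$ by the same reachability argument; your extra check that $\bm A$ is Hurwitz is a harmless addition, since the paper takes this as part of the AMC setup (it also follows from the triangular structure in Proposition~\ref{prop:spectrum}).
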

\begin{proof}
By Lemma~\ref{lem:cycle}(iii), $F_{\Phi}(x) = \mathbb{P} ( T \leq x \mid X(\infty) = s )$ for $x \geq 0$. Since $s$ is an absorbing state, $\{ X(x) = s \} = \{ T \leq x, \, X(\infty) = s \}$, and therefore,
\begin{align}
F_{\Phi}(x) &  = \frac{\mathbb{P} ( X(x) = s) }{\mathbb{P} ( X(\infty) = s)} 
 = \frac{\bm{\beta} (\mathrm{e}^{\bm{A}x} - \bm{I}) \bm{A}^{-1} \bm{s}}{-\bm{\beta} \bm{A}^{-1} \bm{s}}, \label{phi3}
\end{align}
where the second equality stems from \eqref{absorptionprobabilitiesattimet} and \eqref{absorptionprobabilitiesSS}. Note that $\mathbb{P} ( X(\infty) = s) = -\bm{\beta} \bm{A}^{-1} \bm{s} > 0$, since the transition rate from any state $(i,j,1)$ into $\mathcal{L}_2$ is $\mu > 0$, and $s$ is reachable from every state of $\mathcal{L}_2$.
Differentiation of the expression \eqref{phi3} with respect to $x$ gives the expression 
\eqref{density_paoi}, and the expression for the moments of $\Phi$ in \eqref{phi_twomoments} follows from the identity \eqref{hurwitz}.
\end{proof}
Next, we present our main result on the distribution of the steady-state AoI and its moments in Theorem~\ref{thm:AoI}.
\begin{theorem}
\label{thm:AoI}
Consider the AMC $X(t)$ 
characterized with the transient matrix $\bm A$, exit-rate matrix $\bm B$, the initial probability vector $\bm{\beta}$, and the column vector $\bm{h}$ of size $L$, the indicator of the subset $\mathcal{L}_2$, whose non-zero entries are given by,
\begin{align}
{h}_{i|j|2} & = 1, \ 0 \leq i \leq c-1, \ 0 \leq j \leq c, \ i+j \leq c. \label{hdefine}
\end{align}
The steady-state AoI $\Delta$ in the Erlang loss system with $c$ servers that deploys the NP update management policy possesses an ME distribution with pdf given by, 
\begin{align}
f_{\Delta}(x)  & = \frac{\bm{\beta} \mathrm{e}^{\bm{A}x} \bm{h}}{-\bm{\beta} \bm{A}^{-1} \bm{h}},  \ x \geq 0, \label{density_aoi}
\end{align}
the $j^{\text{th}}$ moment of $\Delta$ is written as,
\begin{align}
\mathbb{E} [\Delta^j] & = \frac{j! \ \bm{\beta} {(-\bm{A})^{-j-1}} \bm{h}}{-\bm{\beta} \bm{A}^{-1} \bm{h}}, 
\label{Delta_twomoments}
\end{align}
and the mean AoI cycle length is given by,
\begin{align}
    {\mathbb{E}[\Psi]} & = \frac{\bm{\beta} {\bm{A}^{-1}} \bm{h}}{\bm{\beta} \bm{A}^{-1} \bm{s}}. \label{meancycle}
\end{align}
\end{theorem}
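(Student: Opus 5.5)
The proof will start from Lemma~\ref{lem:cycle}(iii) and turn the conditional occupation probability $\mathbb{P}(X(x)\in\mathcal{L}_2 \mid X(\infty)=s)$ into a matrix-exponential expression. By Lemma~\ref{lem:cycle}(i), $\{X(x)\in\mathcal{L}_2\}\subseteq\{X(\infty)=s\}$, so the joint event reduces to $\{X(x)\in\mathcal{L}_2\}$. Therefore
\begin{align*}
\mathbb{P}(X(x)\in\mathcal{L}_2 \mid X(\infty)=s) = \frac{\mathbb{P}(X(x)\in\mathcal{L}_2)}{\mathbb{P}(X(\infty)=s)} = \frac{\bm{\beta}\mathrm{e}^{\bm{A}x}\bm{h}}{-\bm{\beta}\bm{A}^{-1}\bm{s}}.
\end{align*}
The numerator comes from \eqref{ztclosed}, since $\bm{z}(x)\bm{h}$ sums the transient probabilities over $\mathcal{L}_2$. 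The denominator comes from \eqref{absorptionprobabilitiesSS}, and its positivity was already established in the proof of Theorem~\ref{thm:PAoI}.

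Next, I will obtain the mean cycle length \eqref{meancycle}. Set $\varphi\equiv 1$ in Lemma~\ref{lem:cycle}(iii) and integrate the expression above over $[0,\infty)$. Because $\bm A$ is Hurwitz, identity \eqref{hurwitz} with $m=0$ gives $\int_0^\infty \mathrm{e}^{\bm{A}x}\dd{x}=-\bm{A}^{-1}$. This yields $\mathbb{E}[\Psi]=\bm{\beta}\bm{A}^{-1}\bm{h}/(\bm{\beta}\bm{A}^{-1}\bm{s})$.

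Substituting back into \eqref{palm}, the factor $-\bm{\beta}\bm{A}^{-1}\bm{s}$ cancels. The result is $\mathbb{E}[\varphi(\Delta)]=\int_0^\infty\varphi(x)\,\bm{\beta}\mathrm{e}^{\bm{A}x}\bm{h}\dd{x}/(-\bm{\beta}\bm{A}^{-1}\bm{h})$ for every bounded measurable $\varphi$. Taking $\varphi$ to be the indicator of $[0,y]$ shows that $\Delta$ has the density \eqref{density_aoi}, which is of ME form with representation $(\bm{\beta}/(-\bm{\beta}\bm{A}^{-1}\bm{h}),\bm{A},\bm{h})$. The density is nonnegative because it is proportional to a probability, and it integrates to one by the same calculation used for $\mathbb{E}[\Psi]$.

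Finally, the moments \eqref{Delta_twomoments} follow by integrating $x^j$ against this density and applying \eqref{hurwitz} with $m=j$. Here $\varphi(x)=x^j$ is unbounded, so I will either use monotone convergence with truncations $\min(x^j,K)$ or apply Tonelli directly to the nonnegative integrand. The only steps needing care are this extension to unbounded test functions and confirming $-\bm{\beta}\bm{A}^{-1}\bm{h}>0$. The latter holds because it equals $\mathbb{E}[\Psi]\,\mathbb{P}(X(\infty)=s)$, and $\mathcal{L}_2$ is entered with positive probability and occupied for a positive time before absorption. Everything else is routine linear algebra on top of the lemma.
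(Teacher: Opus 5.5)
Your proposal is correct and follows essentially the same route as the paper: reduce the conditional occupancy of $\mathcal{L}_2$ using Lemma~\ref{lem:cycle}(i), evaluate it with \eqref{ztclosed} and \eqref{absorptionprobabilitiesSS}, substitute into \eqref{palm}, and finish with \eqref{hurwitz}. The only difference is cosmetic: you compute $\mathbb{E}[\Psi]$ first from the lemma's explicit formula ($\varphi \equiv 1$), whereas the paper identifies it afterwards as the normalization constant of $f_{\Delta}$, and these are the same calculation.
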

\begin{proof}
Fix $x \geq 0$. By Lemma~\ref{lem:cycle}(i), $\{ X(x) \in \mathcal{L}_2 \} \subseteq \{ X(\infty) = s \}$, and therefore,
\begin{align}
\mathbb{P} ( X(x) \in \mathcal{L}_2 \: | \: X(\infty)=s) & =  \frac{\mathbb{P} ( X(x)\in \mathcal{L}_2 ) }{\mathbb{P} ( X(\infty) = s)} = \frac{\bm{\beta} \mathrm{e}^{\bm{A}x} \bm{h}}{-\bm{\beta} \bm{A}^{-1} \bm{s}}, \label{occupancyL2}
\end{align}
where the second equality follows from the transient distribution \eqref{ztclosed} together with the definition \eqref{hdefine} of the indicator vector $\bm h$, and from \eqref{absorptionprobabilitiesSS}. Substituting \eqref{occupancyL2} into the identity \eqref{palm} of Lemma~\ref{lem:cycle}(iii) with arbitrary bounded measurable $\varphi$ shows that $\Delta$ possesses the pdf
\begin{align*}
f_{\Delta}(x) & = \frac{1}{\mathbb{E}[\Psi]} \ \frac{\bm{\beta} \mathrm{e}^{\bm{A}x} \bm{h}}{-\bm{\beta} \bm{A}^{-1} \bm{s}}, \ x \geq 0.
\end{align*}
Since $f_{\Delta}$ integrates to one, and $\int_0^{\infty} \bm{\beta} \mathrm{e}^{\bm{A}x} \bm{h} \dd{x} = -\bm{\beta} \bm{A}^{-1} \bm{h}$ by \eqref{hurwitz}, the normalization constant is identified as ${\mathbb{E}[\Psi]} \, (-\bm{\beta} \bm{A}^{-1} \bm{s}) = -\bm{\beta} \bm{A}^{-1} \bm{h}$, which yields \eqref{density_aoi} together with the expression \eqref{meancycle} for the mean AoI cycle length. The expression for the moments of $\Delta$ in \eqref{Delta_twomoments} follows from the identity \eqref{hurwitz}.
\end{proof}
The ME forms in Theorems~\ref{thm:PAoI} and \ref{thm:AoI} immediately yield the transforms and the tail behavior of the age processes.

\begin{proposition}[Phase-type representations]
\label{prop:phrep}
Let $\bm{\sigma} = -\bm{A}^{-1} \bm{s}$ and $\bm{\eta} = -\bm{A}^{-1} \bm{h}$, and let
$\bm{D}_{\sigma} = \mathrm{diag}(\bm{\sigma})$ and $\bm{D}_{\eta} = \mathrm{diag}(\bm{\eta})$.
Then, the steady-state PAoI $\Phi$ is CPH-distributed with representation $(\tilde{\bm{\beta}}, \tilde{\bm{A}})$, where
\begin{align}
\tilde{\bm{A}} & = \bm{D}_{\sigma}^{-1} \bm{A}\, \bm{D}_{\sigma}, \enspace
\tilde{\bm{s}} = \bm{D}_{\sigma}^{-1} \bm{s}, \enspace
\tilde{\bm{\beta}} = \frac{\bm{\beta} \bm{D}_{\sigma}}{\bm{\beta} \bm{\sigma}}, \label{phrepPAoI}
\end{align}
with $\tilde{\bm{s}}$ being the associated exit-rate vector. Similarly, the steady-state AoI $\Delta$ is CPH-distributed with representation $(\hat{\bm{\beta}}, \hat{\bm{A}})$, where
\begin{align}
\hat{\bm{A}} & = \bm{D}_{\eta}^{-1} \bm{A}\, \bm{D}_{\eta}, \enspace
\hat{\bm{h}} = \bm{D}_{\eta}^{-1} \bm{h}, \enspace
\hat{\bm{\beta}} = \frac{\bm{\beta} \bm{D}_{\eta}}{\bm{\beta} \bm{\eta}}, \label{phrepAoI}
\end{align}
with $\hat{\bm{h}}$ being the associated exit-rate vector.
\end{proposition}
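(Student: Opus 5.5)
The plan is to verify, for the PAoI case, that $(\tilde{\bm{\beta}}, \tilde{\bm{A}})$ is a legitimate CPH representation whose density coincides with \eqref{density_paoi}. The AoI case then follows verbatim with $\bm{s}$ replaced by $\bm{h}$ and Theorem~\ref{thm:AoI} in place of Theorem~\ref{thm:PAoI}.

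\emph{Step 1: positivity of $\bm{\sigma}$ (and of $\bm{\eta}$).} Observe that $\sigma_l = (-\bm{A}^{-1}\bm{s})_l$ is, by \eqref{absorptionprobabilitiesSS}, the probability that the AMC started in transient state $l$ is absorbed into $s$. Likewise $\eta_l = \int_0^\infty (\mathrm{e}^{\bm{A}x}\bm{h})_l \dd{x}$ is the expected time spent in $\mathcal{L}_2$ starting from $l$. I would then show that every transient state reaches $\mathcal{L}_2$ with positive probability, and from $\mathcal{L}_2$ reaches $s$ with positive probability.
\begin{itemize}
\item From any $(i,j,1)$ the rate $\mu$ leads into $\mathcal{L}_2$.
\item From any $(i,j,2)$, either $j>0$ and exit to $s$ has positive rate, or $j=0$ and an arrival (or, if $i=c$ and no arrival is possible, a completion lowering $i$) eventually yields $j>0$.
\end{itemize}
Hence $\bm{\sigma}>\bm{0}$ and $\bm{\eta}>\bm{0}$ entrywise, so $\bm{D}_\sigma$ and $\bm{D}_\eta$ are invertible. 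This is the step requiring some care, since it has to be checked for each policy's transition table; the same argument applies to PR and PS as long as $u$ is unreachable from $\mathcal{L}_2$ (Lemma~\ref{lem:cycle}(i)).

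\emph{Step 2: sub-generator property.} The off-diagonal entries of $\tilde{\bm{A}}$ are $\sigma_k^{-1}A_{k,l}\sigma_l\ge 0$, and the diagonal is unchanged, hence negative. For the row sums, use $\bm{A}\bm{\sigma}=-\bm{s}$:
\[
\tilde{\bm{A}}\bm{1} = \bm{D}_\sigma^{-1}\bm{A}\bm{\sigma} = -\bm{D}_\sigma^{-1}\bm{s} = -\tilde{\bm{s}} .
\]
Since $\tilde{\bm{s}}\ge \bm{0}$, $\tilde{\bm{A}}$ is a generator of a transient chain with exit vector $\tilde{\bm{s}}$. Its spectrum equals that of $\bm{A}$ by similarity, so it is Hurwitz. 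Also $\tilde{\bm{\beta}}\ge\bm{0}$ and $\tilde{\bm{\beta}}\bm{1} = \bm{\beta}\bm{\sigma}/\bm{\beta}\bm{\sigma}=1$, so it is a probability vector. This is the usual Doob $h$-transform conditioning on absorption into $s$. For $\Delta$, use instead $\bm{A}\bm{\eta}=-\bm{h}$, which gives $\hat{\bm{A}}\bm{1}=-\hat{\bm{h}}$ with $\hat{\bm{h}}=\bm{D}_\eta^{-1}\bm{h}\ge\bm{0}$.

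\emph{Step 3: matching the density.} Because $\mathrm{e}^{\tilde{\bm{A}}x}=\bm{D}_\sigma^{-1}\mathrm{e}^{\bm{A}x}\bm{D}_\sigma$, the CPH density in \eqref{cdf1pdf1} becomes
\[
\tilde{\bm{\beta}}\mathrm{e}^{\tilde{\bm{A}}x}\tilde{\bm{s}} = \frac{\bm{\beta}\bm{D}_\sigma\bm{D}_\sigma^{-1}\mathrm{e}^{\bm{A}x}\bm{D}_\sigma\bm{D}_\sigma^{-1}\bm{s}}{\bm{\beta}\bm{\sigma}} = \frac{\bm{\beta}\mathrm{e}^{\bm{A}x}\bm{s}}{-\bm{\beta}\bm{A}^{-1}\bm{s}} .
\]
This is \eqref{density_paoi}. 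The identical computation with $\bm{\eta}$ and $\bm{h}$ reproduces \eqref{density_aoi}, completing the proof.
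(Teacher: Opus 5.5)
Your proposal is correct and follows the paper's proof essentially step for step. It shows positivity of $\bm{\sigma}$ and $\bm{\eta}$ via reachability of $\mathcal{L}_2$ and then $s$, derives the row-sum identity $\tilde{\bm{A}}\bm{1} = -\tilde{\bm{s}}$ from $\bm{A}\bm{\sigma} = -\bm{s}$, matches the densities through $\mathrm{e}^{\tilde{\bm{A}}x} = \bm{D}_{\sigma}^{-1}\mathrm{e}^{\bm{A}x}\bm{D}_{\sigma}$, and handles the AoI case verbatim with $(\bm{\eta},\bm{h})$; your added remark that $\tilde{\bm{A}}$ and $\hat{\bm{A}}$ are Hurwitz by similarity is a point the paper leaves implicit.
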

\begin{proof}
The $i^{\text{th}}$ entry $\sigma_i$ of $\bm{\sigma}$ is the probability of successful absorption starting from the transient state $i$, which is positive for every $i$, since the transition rate from any state $(i,j,1)$ into $\mathcal{L}_2$ is $\mu>0$ and $s$ is reachable from every state of $\mathcal{L}_2$; hence $\bm{D}_{\sigma}$ is invertible, and the off-diagonal entries $a_{ij} \sigma_j / \sigma_i$ of $\tilde{\bm{A}}$ are non-negative, while $\tilde{\bm{\beta}}$ is a probability vector by construction. Writing the $i^{\text{th}}$ row of the identity $\bm{A} \bm{\sigma} = -\bm{s}$ as $a_{ii} \sigma_i + \sum_{j \neq i} a_{ij} \sigma_j = -s_i$, and dividing by $\sigma_i$, gives $a_{ii} + \sum_{j \neq i} \tilde{a}_{ij} + \tilde{s}_i = 0$, so that the pair $(\tilde{\bm{s}}, \tilde{\bm{A}})$ satisfies 
$\tilde{\bm{A}} \bm{1} +  \tilde{\bm{s}}=0$, $\tilde{\bm{A}}$ is a valid CPH transient matrix with exit-rate vector $\tilde{\bm{s}}$. Finally, since $\mathrm{e}^{\bm{D}_{\sigma}^{-1} \bm{A} \bm{D}_{\sigma} x} = \bm{D}_{\sigma}^{-1} \mathrm{e}^{\bm{A}x} \bm{D}_{\sigma}$, we have $\tilde{\bm{\beta}}\, \mathrm{e}^{\tilde{\bm{A}}x} \tilde{\bm{s}} = {\bm{\beta}\, \mathrm{e}^{\bm{A}x} \bm{s}}/({-\bm{\beta} \bm{A}^{-1} \bm{s}}) = f_{\Phi}(x)$ by Theorem~\ref{thm:PAoI}, proving the claim for $\Phi$. The claim for $\Delta$ follows verbatim with the pair $(\bm{\eta}, \bm{h})$ replacing $(\bm{\sigma}, \bm{s})$, upon noting that the $i^{\text{th}}$ entry $\eta_i$ of $\bm{\eta}$, the expected time spent in $\mathcal{L}_2$ starting from the transient state $i$, is positive for every $i$, and that $\bm{A} \bm{\eta} = -\bm{h}$.
\end{proof}

\begin{remark}[The $c=1$ sub-case]
\label{rem:subcase1} Let us visit the special case $c=1$ for which the state space of the problem is written as, 
\begin{align}
    \mathcal{L} & = \{(0,0,1),(0,0,2),(0,1,2),u,s \}, \label{cequals1}
\end{align}
With the ordering of the states as above, we express the characterizing matrices as,
\begin{align*}
    {\bm A} & = \begin{bmatrix}
        -\mu & \mu & 0 \\
        0 & -\lambda & \lambda \\
        0 & 0 & -\mu  
        \end{bmatrix}, \ {\bm s} = \begin{bmatrix}
            0 \\ 0 \\ \mu 
        \end{bmatrix},
        \ {\bm h} = \begin{bmatrix}
            0 \\ 1 \\ 1 
        \end{bmatrix},   
\end{align*}
and ${\bm{\beta}}  = \begin{bmatrix}
            1 & 0 & 0 
        \end{bmatrix}$. Using the two identities \eqref{phi_twomoments} and \eqref{Delta_twomoments}, we obtain
        \begin{align*}
      \mathbb{E}[\Phi] & = \frac{1}{\lambda} + \frac{2}{\mu}, \enspace   \mathbb{E}[\Delta]  = \frac{1}{\lambda} + \frac{2}{\mu}  - \frac{1}{\lambda + \mu},
        \end{align*}
        which overlap with the results obtained in \cite{costa_etal_TIT16} for the same system.
\end{remark}
        
\subsection{Preempt at Random Policy}
For PR, the state space of the problem is the same as in Table~\ref{tab:states}. However, we have additional transitions in comparison to NP, upon an update arrival when the system is full. These additional transition rates are presented in Table~\ref{tab:transitionratesPR}. Let us first consider the transient state $(i,j,1)$. Under an arrival upon a full system, $j$ will be incremented and $i$ will be decremented when $i>0$ with probability $i/(i+j+1)$. On the other hand, $P^{\ast}$ will be replaced with the new update with probability $1/(1+i+j)$, which then forces absorption into unsuccessful absorbing state $u$.
When $X(t)$ is in state $(i,j,2)$ and the system is full, i.e., $i+j=c$, for $i>0$, $j$ will be incremented and $i$ will be decremented with probability $i/(i+j)$. With these additional transition rates, the transient matrix $\bm A$ and exit-rate matrix $\bm B$ are characterized for PR. Let us now turn our attention to the initial probability vector $\bm{\beta}$. For this purpose, we observe that the incoming update $P^{\ast}$ is never blocked and it always will start receiving service. When there are $i, 0 \leq i < c-1$ busy servers, then $P^{\ast}$ will start service in the AMC state $(i,0,1)$. 
However, when there $i=c-1$ or $c$ busy servers, in each of these two cases, $P^{\ast}$ will start service in the AMC state $(c-1,0,1)$, since in the latter case, it will replace one of the ongoing updates. Mathematically, $\bm{\beta}$ is a row vector with its non-zero entries given as follows,
\begin{align}
\beta_{i|0|1} & = y_i, \enspace 0 \leq i < c-1, \enspace \beta_{c-1|0|1} = y_{c-1} + y_c. \label{betaPR}
\end{align}
Also, the column vector $\bm h$ is selected as in \eqref{hdefine}.
Now that the characterization of the AMC $X(t)$ is complete for PR, we observe that Lemma~\ref{lem:cycle}, and hence Theorems~\ref{thm:PAoI}, \ref{thm:AoI}  and Proposition~\ref{prop:phrep}, continue to hold verbatim: the entry of the AMC into $\mathcal{L}_2$ still coincides with the successful reception of $P^{\ast}$, the unsuccessful state $u$ remains unreachable from $\mathcal{L}_2$ (see Table~\ref{tab:transitionratesPR}), and the initial vector \eqref{betaPR} is again obtained from the PASTA property, now conditioning on the arrival being admitted with or without preemption. One can therefore employ \eqref{density_paoi} and \eqref{density_aoi} for writing the pdfs of the PAoI and AoI processes, respectively, for PR, from which any of the moments can be calculated.
\begin{table}[t]
\caption{Additional transition rates for $X(t)$ for PR when the system is full}
\centering
\begin{tabular}{|c|c|c|c|} 
 \hline 
 From & To  & Condition & Value  \\ 
 \hline
 $(i,j,1)$ & $(i-1,j+1,1)$ & $i + j = c-1, \enspace i>0$ & $ \frac{i}{i+j+1} \lambda $   \\ \cline{2-4}
           & $u$ & $i + j = c-1$ & $\frac{1}{i+j+1} \lambda$ \\
                 \hline
$(i,j,2)$ & $(i-1,j+1,2)$ &  $i + j = c,\enspace i>0$ & $\frac{i}{i+j} \lambda$   \\ 
              \hline
\end{tabular}
\label{tab:transitionratesPR}
\end{table}

\subsection{Preempt the Stalest Policy}
For PS, the state space of the problem is again the same as in Table~\ref{tab:states}. Similar to PR, we have additional transitions in comparison to NP, upon an update arrival when the system is full. These additional transition rates are presented in Table~\ref{tab:transitionratesPS}. Let us first consider the transient state $(i,j,1)$. Under an arrival upon a full system, $j$ will be incremented and $i$ will be decremented when $i>0$ since the stalest update will be replaced with the new arrival. On the other hand, when the system is full, $P^{\ast}$ will be replaced with the new update when $i=0$, which gives rise to absorption into unsuccessful absorbing state $u$.
When $X(t)$ is in state $(i,j,2)$ and the system is full, i.e., $i+j=c$, for $i>0$, $j$ will be incremented and $i$ will be decremented.  With the use of Table~\ref{tab:transitionratesPS}, we obtain the characterizing matrices $\bm A$ and $\bm B$ for the AMC $X(t)$ for PS.
Also, the column vector $\bm h$ is selected as in \eqref{hdefine}.
Moreover, the initial probability vector $\bm{\beta}$ is the same as in \eqref{betaPR}.
Having completed the characterization of the AMC $X(t)$ for PS, we note that Lemma~\ref{lem:cycle} again holds verbatim, by the same reasoning as for PR (see Table~\ref{tab:transitionratesPS}), and one can use \eqref{density_paoi} and \eqref{density_aoi} for writing the pdfs of the PAoI and AoI processes, respectively, for PS, from which any of the moments can be calculated.
\begin{table}[t]
\caption{Additional transition rates for $X(t)$ for PS when the system is full}
\centering
\begin{tabular}{|c|c|c|c|} 
 \hline 
 From & To  & Condition & Value  \\ 
 \hline
 $(i,j,1)$ & $(i-1,j+1,1)$ & $i + j = c-1, \enspace i>0$ & $ \lambda $   \\ \cline{2-4}
           & $u$ & $i + j = c-1, \enspace i=0$ & $ \lambda$ \\
                 \hline
$(i,j,2)$ & $(i-1,j+1,2)$ &  $i + j = c,\enspace i>0$ & $ \lambda$   \\ 
              \hline
\end{tabular}
\label{tab:transitionratesPS}
\end{table}

\begin{proposition}[Spectrum of the transient matrix]
\label{prop:spectrum}
For each of the policies NP, PR, and PS, order the transient states lexicographically by
(phase, $-i$, $j$). Then every transition of the AMC strictly increases this order, so
that $\bm{A}$ is permutation-similar to an upper triangular matrix; consequently, the
spectrum of $\bm{A}$ is real and equals the multiset of its diagonal entries. Every
diagonal entry is at most $\max(-\lambda, -c\mu)$, and the entries $-\lambda$ and
$-c\mu$ are attained at the states $(0,0,2)$ and $(0,c,2)$, respectively; hence
\begin{align}
\alpha(\bm{A}) & = -\min(\lambda, c\mu). \label{spectralabscissa}
\end{align}
\end{proposition}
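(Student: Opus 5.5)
The plan has two parts. First, show that the lexicographic order makes $\bm A$ triangular. Second, read off the diagonal entries as negated total outflow rates and bound them.

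\emph{Triangularity.} I will list every transition between transient states in Tables~\ref{tab:transitionrates}, \ref{tab:transitionratesPR} and \ref{tab:transitionratesPS} and check that each one strictly increases the key (phase, $-i$, $j$). Transitions into $u$ or $s$ leave the transient set and only contribute to $\bm B$, so they can be ignored here. The transient transitions fall into four types:
\begin{itemize}
\item $(i,j,1)\to(i,j,2)$ raises the phase.
\item An arrival $(i,j,m)\to(i,j+1,m)$ keeps the phase and $-i$, and raises $j$.
\item An earlier-stamped completion $(i,j,m)\to(i-1,j,m)$ keeps the phase and raises $-i$.
\item A PR/PS preemption $(i,j,m)\to(i-1,j+1,m)$ keeps the phase and raises $-i$.
\end{itemize}
Hence, after permuting states into this order, every nonzero off-diagonal entry of $\bm A$ lies above the diagonal. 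The permuted matrix is upper triangular and similar to $\bm A$, so the spectrum of $\bm A$ is the multiset of its diagonal entries, which is real.

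\emph{Diagonal entries.} The diagonal entry at a transient state equals minus its total outflow rate, counting transitions to $u$ and $s$. I will show that this outflow is always at least $\min(\lambda,c\mu)$ by splitting into nonfull and full states. A state $(i,j,1)$ is full when $i+j=c-1$, and a state $(i,j,2)$ is full when $i+j=c$.
\begin{itemize}
\item In a nonfull state, an arrival always produces a transition that leaves the state: $j$ increments. The outflow is therefore $\lambda+(i+j+1)\mu$ in phase 1 and $\lambda+(i+j)\mu$ in phase 2, so it is at least $\lambda$.
\item In a full state, all $c$ servers are busy: in phase 1 this means $P^{\ast}$ together with $i+j=c-1$ others, and in phase 2 it means $i+j=c$. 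Each service completion leaves the state. In phase 1, completions go to $(i-1,j,1)$, to $u$, or to $(i,j,2)$; in phase 2, they go to $(i-1,j,2)$ or to $s$. The outflow is therefore at least $c\mu$, plus possibly a $\lambda$-term from preemption or absorption into $u$.
\end{itemize}
This gives every diagonal entry $\le \max(-\lambda,-c\mu)$.

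\emph{Attainment.} It remains to check the two named states.
\begin{itemize}
\item At $(0,0,2)$, the only exit is an arrival at rate $\lambda$, since there are no completions. It is nonfull when $c\ge 1$, so the diagonal entry is $-\lambda$ under all three policies.
\item At $(0,c,2)$, completions exit to $s$ at rate $c\mu$. An arrival is either blocked (NP) or replaces a later-stamped update (PR/PS); the PR/PS preemption transition requires $i>0$. In both cases the arrival leaves the state unchanged, so it is not a transition and the diagonal entry is exactly $-c\mu$.
\end{itemize}
Taking the maximum over the diagonal then yields $\alpha(\bm A)=-\min(\lambda,c\mu)$.

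The main obstacle is the bookkeeping at the boundary. I must make sure that ``self-loop'' arrival events in full states are not counted in the diagonal, and that no transient transition has been overlooked that would decrease the order. In particular, I will recheck the phase-1 full states under PR: the preemption of $P^{\ast}$ goes to $u$ and the preemption of an earlier update goes to $(i-1,j+1,1)$, but the arrival rate is split between these and is not a self-loop.
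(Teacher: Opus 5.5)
Your proposal is correct and takes essentially the same route as the paper's proof. You check that every transient transition strictly increases the key (phase, $-i$, $j$) to obtain triangularity, then read the diagonal entries as negated total outflow rates, correctly excluding arrivals that leave the state unchanged; the only cosmetic difference is that the paper lists the exact full-state diagonal values for each policy (e.g.\ $-(\lambda(i+1)/c + c\mu)$ under PR), whereas you lower-bound the full-state outflow by $c\mu$, which is all that the bound and the attainment at $(0,0,2)$ and $(0,c,2)$ require.
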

\begin{proof}
Inspection of Tables~\ref{tab:transitionrates}, \ref{tab:transitionratesPR}, and
\ref{tab:transitionratesPS} shows that along every transition within the transient set,
the phase never decreases, the index $i$ never increases, and the index $j$ never
decreases within a phase. Arrivals increase $j$, completions of earlier-stamped updates
decrease $i$, the phase transition preserves $(i,j)$, and every admissible full-pool
preemption maps $(i,j)$ to $(i-1,j+1)$, while preemptions of later-stamped updates leave
the state unchanged. Hence, $\bm{A}$ is upper triangular under the stated order, and its
spectrum is the multiset of its diagonal entries, each being the negated total outflow
rate of the corresponding state: $-(\lambda + n\mu)$ for states with $n < c$ busy
servers, and, at full states, $-c\mu$ (NP), $-(\lambda(i+1)/c + c\mu)$ or
$-(\lambda i/c + c\mu)$ (PR), and $-(\lambda + c\mu)$ or $-c\mu$ (PS), all of which are
bounded above by $\max(-\lambda, -c\mu)$, with equality at the two stated states, proving
\eqref{spectralabscissa}. 
\end{proof}

\begin{remark}[The sub-case $c=1$]
\label{rem:subcase2}
For the sub-case $c=1$, PS is equivalent to PR. Similar to NP, we have the same state space as in \eqref{cequals1} and with the same ordering of the states, we have,
\begin{align*}
    {\bm A} & = \begin{bmatrix}
        -(\lambda + \mu) & \mu & 0 \\
        0 & -\lambda & \lambda \\
        0 & 0 & -\mu  
        \end{bmatrix}, \ {\bm s} = \begin{bmatrix}
            0 \\ 0 \\ \mu 
        \end{bmatrix},
        \enspace {\bm h} = \begin{bmatrix}
            0 \\ 1 \\ 1 
        \end{bmatrix}, 
\end{align*}
and ${\bm{\beta}}  = \begin{bmatrix}
            1 & 0 & 0 
        \end{bmatrix}$. Using the two identities \eqref{phi_twomoments} and \eqref{Delta_twomoments}, we obtain
        \begin{align*}
      \mathbb{E}[\Phi] & = \frac{1}{\lambda} + \frac{1}{\mu} + \frac{1}{\lambda + \mu},  \enspace \mathbb{E}[\Delta]  = \frac{1}{\lambda} + \frac{1}{\mu},
        \end{align*}
        which overlap with the results obtained in \cite{najm_nasser_isit16} for exponentially distributed service times.
        \end{remark}
\subsection{Extension to Multiple Sources}
\label{sec:multisource}
In this subsection, we will present the analytical model for the multi-source multi-server scenario. In this scenario, there are $N$ sources each submitting updates to the pool of $c$ servers according to a Poisson process with intensity $\lambda_n$ for source-$n$. 
The PR and PS policies only perform self-preemption, and cross-preemption is not allowed. Particularly, upon finding the pool full, the incoming update from source-$n$ preempts one of the servers serving source-$n$ updates either at random (PR), or the one serving the stalest source-$n$ update (PS). 
The steady-state AoI and PAoI of source-$n$ are denoted by $\Delta^{(n)}$ and $\Phi^{(n)}$, respectively.
Without any loss of generality, let us only consider the AoI/PAoI processes of source-$1$ since the sources can always be renumbered. 
For convenience, we let $\lambda = \lambda_1, \Delta=\Delta^{(1)}, \Phi=\Phi^{(1)}$.
We notice that the other sources affect the AoI process of source-$1$ only through their sum update rate $\lambda' = \sum_{n=2}^N \lambda_n$.  
We construct the AMC $X(t)$ for source-$1$ with the state space given in Table~\ref{tab:statesmultiplesources} where $k$ denotes the number of busy servers occupied with updates from all sources other than source-$1$. 
The number of states of the form $(i,j,k,1), 0 \leq i,j,k \leq c-1, i+j+k<c$ is equal to $L_1= c(c+1)(c+2)/6$, whereas the number of states of the form $(i,j,k,2), 0 \leq i \leq c-1, 0 \leq j,k \leq c, i+j+k \leq c$ is $L_2 = -1+(c+1)(c+2)(c+3)/6$.
The transition rates for the NP policy and the additional transition rates for the PR and PS policies are given in Table~\ref{tab:transitionratesmultiplesources}. 
 \begin{table}[t]
\caption{The state space for the AMC $X(t)$ for source-$1$ with update traffic from $N-1$ other sources}
\centering
\begin{tabular}{|c|c|} 
 \hline
 State & Description \\
 \hline \hline
 $(i,j,k,1)$ & $0 \leq i,j,k \leq c-1, \enspace 0 \leq i + j +k < c$ \\
 \hline
$(i,j,k,2)$ & $0 \leq i \leq c-1, \enspace 0 \leq j,k  \leq c, \enspace 0 \leq i + j + k \leq c$ \\
 \hline
 $u$ & Unsuccessful absorbing state  \\
 \hline
 $s$ & Successful absorbing state  \\ \hline 
\end{tabular}
\label{tab:statesmultiplesources}
\end{table}
\begin{table*}[t]
\caption{Transition rates for $X(t)$ in the multi-source scenario}
\centering
\begin{tabular}{|c|c|c|c|} 
 \hline 
 From & To  & Condition & Value    \\ 
 \hline
 \multicolumn{4}{|c|}{Transition rates for NP} \\ \hline
 $(i,j,k,1)$ & $(i,j+1,k,1)$ & $i + j +k < c-1$ & $\lambda$ \\ \cline{2-4}
             &  $(i,j,k+1,1)$ & $i + j +k < c-1$ & $\lambda'$  \\ \cline{2-4} 
               & $(i-1,j,k,1)$ & $i > 0$ & $i \mu$  \\ \cline{2-4} 
               & $u$ &  $j>0$ & $j \mu$  \\ \cline{2-4} 
               & $(i,j,k-1,1)$ &  $k>0$ & $k \mu$  \\ \cline{2-4} 
               & $(i,j,k,2)$  & & $\mu$  \\ 
                 \hline
$(i,j,k,2)$ & $(i,j+1,k,2)$ &  $i + j +k < c$ & $\lambda$   \\ \cline{2-4} 
            & $(i,j,k+1,2)$ &  $i + j +k < c$ & $\lambda'$   \\ \cline{2-4} 
              & $(i-1,j,k,2)$ & $i > 0$ & $i \mu$ \\ \cline{2-4}
               & $(i,j,k-1,2)$ & $k > 0$ & $k \mu$ \\ \cline{2-4}
              & $s$ & $j > 0$& $j \mu$ \\
              \hline
    \multicolumn{4}{|c|}{Additional transition rates for PR} \\ \hline           
       $(i,j,k,1)$ & $(i-1,j+1,k,1)$ & $i + j +k= c-1, \enspace i>0$ & $ \frac{i}{i+j+1} \lambda $   \\ \cline{2-4}
           & $u$ & $i + j+k = c-1$ & $\frac{1}{i+j+1} \lambda$ \\
                 \hline
$(i,j,k,2)$ & $(i-1,j+1,k,2)$ &  $i + j+k  = c, \enspace i>0$ & $\frac{i}{i+j} \lambda$   \\ 
              \hline  
      \multicolumn{4}{|c|}{Additional transition rates for PS} \\ \hline  
$(i,j,k,1)$ & $(i-1,j+1,k,1)$ & $i + j +k= c-1, \enspace i>0$ & $ \lambda $   \\ \cline{2-4}
           & $u$ & $i+j +k= c-1,\enspace i=0$ & $ \lambda$ \\
                 \hline
$(i,j,k,2)$ & $(i-1,j+1,k,2)$ &  $i + j +k= c, \enspace i>0$ & $ \lambda$   \\ 
              \hline    
\end{tabular}
\label{tab:transitionratesmultiplesources}
\end{table*}
Now the initial probability vector $\bm{\beta}$ needs to be obtained for the AMC $X(t)$. For this purpose, we define a two-dimensional recurrent Markov chain $W(t) \in \{  (u,v), 0 \leq u,v \leq c, 0 \leq u+v \leq c \}$ where $u$ and $v$ keep track of the number of busy servers with updates from source-$1$ and source-$1'$, respectively. The transition rate from state $(u,v)$ to $(u-1,v)$ is $ u \mu$ for $u>0$, and to $(u,v-1)$ is $v \mu$ for $v>0$, whereas the transition rate from $(u,v)$ to $(u+1,v)$ is $\lambda$ for $u>0$ and to $(u,v+1)$ is $\lambda'$ for $v>0$, when $u+v< c$. One can find the steady-state probabilities
\begin{align}
    \pi_{u,v} & = \lim_{t \rightarrow \infty} \mathbb{P} (W(t) = (u,v)), \\
     \pi_{u,v}' & = \lim_{t \rightarrow \infty} \mathbb{P} (W(t) = (u,v) | u+v \neq c), \\
     \pi_{u,v}'' & = \lim_{t \rightarrow \infty} \mathbb{P} (W(t) = (u,v) | v \neq c). \label{pidoubleprime}
\end{align}
As in the single-source case, we employ any enumeration of the $L=L_1 + L_2$ transient  states in Table~\ref{tab:statesmultiplesources} while placing the absorbing states $u$ and $s$ at the end of the list of states, in this order. The enumeration process can be abstracted by a function $g(\cdot)$ which maps a joint state $(i,j,k,m)$ to an index $l, 1 \leq l \leq L$, i.e., $l=g(i,j,k,m)$. 
However, we use the notation $l=i|j|k|m$ instead.
For the NP policy, the non-zero entries of the initial probability 
vector $\bm{\beta}$ can be written as,
\begin{align}
    \beta_{i|0|k|1} & = \pi_{i,k}', \enspace 0 \leq i,k \leq c-1, \enspace i+k < c,
\end{align}
where $\beta_{i|0|k|1}$ is the probability of finding $i$ busy servers with updates from source-$1$
and $k$ busy servers with updates from source-$1'$, conditioned on an nonfull system of $c$ servers.
On the other hand, for the PR and PS policies, the non-zero entries of their initial probability 
vector $\bm{\beta}$ are,
\begin{align}
\beta_{i|0|k|1} & = \pi_{i,k}'', \ i+ k < c-1, \\
\beta_{i|0|c-i-1|1} & =  \pi_{i,c-i-1}'' + \pi_{i+1,c-i-1}'', \ 0 \leq i \leq c-1.
\end{align}
To explain this, when the system is nonfull, a source-$1$ arrival joins one of the free servers, leaving the pair $(i,k)$ intact. When the system is full with at least one server carrying a source-$1$ update, the arrival is admitted by preempting one of those source-$1$ updates, so that a configuration $(u,v)=(i+1,c-i-1)$ at the arrival epoch leads to the initial state $i|0|c-i-1|1$. Finally, a source-$1$ arrival cannot join a server when all $c$ servers are occupied carrying source-$1'$ updates, which is the conditioning event excluded in \eqref{pidoubleprime}.
At this point, we have completed the characterization of the AMC $X(t)$ through its transient matrix $\bm A$, exit-rate matrix $\bm B$, the initial probability vector $\bm{\beta}$, and the column vector $\bm h$ for the three policies of interest. Lemma~\ref{lem:cycle} holds verbatim for the multi-source AMC as well. Successful receptions of source-$1$ updates delimit the AoI cycles of source-$1$, the entry into $\mathcal{L}_2$ coincides with the successful reception of the tagged source-$1$ update, the state $u$ is unreachable from $\mathcal{L}_2$, and the initial vector is obtained by the PASTA property applied to source-$1$ arrivals. Note that the aggregation of the other sources into a single Poisson stream of rate $\lambda'$ is exact here, since, under self-preemption, an arrival from source-$1'$ finding a full system either preempts another source-$1'$ update or is discarded, in both cases leaving the state $(i,j,k,m)$ of the tagged chain unchanged. Consequently, the expressions \eqref{density_paoi},\eqref{phi_twomoments},\eqref{density_aoi} and \eqref{Delta_twomoments}, as well as the the CPH representations of Proposition~\ref{prop:phrep}, can be used for the multi-source scenario, with the state space size $L = L_1 + L_2 = O(c^3)$ being independent of the number of sources $N$.

\subsection{Discussion}
\label{sec:discussion}
We close this section with a discussion of computational complexity and of two natural extensions.
For a single source, the number of transient states is $L = L_1 + L_2 = c^2 + 2c$, and for multiple sources, $L = O(c^3)$, independently of $N$. All the quantities appearing in Theorems~\ref{thm:PAoI} and \ref{thm:AoI}, namely $-\bm{\beta}\bm{A}^{-1}\bm{s}$, $-\bm{\beta}\bm{A}^{-1}\bm{h}$, the moments, and the transforms, require the solution of linear systems with the sparse matrix $\bm{A}$, whereas the evaluation of the pdfs at a given $x$ requires the evaluation of the matrix exponential $\mathrm{e}^{\bm{A}x}$. Both operations are routine for moderate state space sizes arising here even for large $c$.
\begin{remark}[On insensitivity]
\label{rem:insensitivity}
It is well known that the steady-state occupancy distribution \eqref{ErlangLoss} of the Erlang loss system, and in particular the blocking probability $y_c$, is insensitive to the service time distribution beyond its mean \cite{book_grossAndharris}. No such insensitivity can be expected for the AoI and PAoI distributions derived in this paper. Indeed, already for $c=1$, the mean AoI of the bufferless single-server system with preemption in service depends on the entire service time distribution and not only on its mean; see \cite{najm_nasser_isit16} where gamma-distributed service times are studied. Moreover, the AMC construction of this paper exploits the memorylessness of the service times in an essential way, e.g., through the exchangeability of remaining and fresh service times under preemption, and through the sufficiency of the counts $(i,j)$ for describing the fate of the tagged update. Quantifying the sensitivity of the age distributions in the Erlang loss system, whose occupancy distribution is insensitive, appears to be an open problem.
\end{remark}
\begin{remark}[Phase-type service times]
\label{rem:phservice}
The AMC method extends, in principle, to CPH-distributed service times with, say, $p$ phases: in addition to the counts $i$ and $j$ (and $k$ in the multi-source case), one needs to track how the earlier-stamped and later-stamped in-service updates are distributed over the $p$ service phases, i.e., the scalar counts are replaced by compositions, with the number of unordered configurations of $i$ updates over $p$ phases being $\binom{i+p-1}{p-1}$. The construction of the transition rates is then mechanical but the state space grows polynomially in $c$ with degree increasing in $p$, and the notation becomes cumbersome. Moreover, under PR and PS, preempting an update may no longer be advantageous for some service time distributions; see \cite{dogan_akar_tcom21} for some single-server results. We therefore do not pursue this extension here and leave it for future work.
\end{remark}

\section{Numerical Results}
\label{sec:numerical}
In this section, we present three numerical examples, all obtained directly from the exact expressions of Theorems~\ref{thm:PAoI} and \ref{thm:AoI}. No simulations are needed, although all the analytical results of this paper have additionally been cross-verified against independent discrete-event simulations. The examples illustrate, in turn, the distributional differences among the three policies, the use of the exact distributions for server provisioning under age violation constraints, and the use of the multi-source model for update rate allocation.

\begin{example}
\label{ex:ccdf}
We consider a single source with $\mu = 1$, and plot in Fig.~\ref{fig:example1} the complementary cdf $\mathbb{P}(\Delta > x)$ of the steady-state AoI on a logarithmic scale under the three policies, for the parameter pair $(\lambda, c) = (4,4)$ and $(8,8)$, both critically loaded with $\rho=1$, and the overloaded pair $(\lambda, c) = (16,8)$ with $\rho=2$. In all three plots, the PS curve lies below the PR curve, which in turn lies below the NP curve, for all values of $x$, suggesting an ordering of the three age distributions in the sense of first-order stochastic dominance for exponentially distributed service times. Two further observations are in order. First, the gains from preemption grow with the depth of the tail, and and also with the load. Second, the asymptotic behavior of the three curves is governed by
Proposition~\ref{prop:spectrum} as follows. Within each plot, all three policies share the same
exponential decay rate $\min(\lambda, c\mu)$, equal to $4$, $8$, and $8$, respectively,
so that the policies differ in the tail through the prefactor rather than through the
exponent. The prefactors need not be constants, since the dominant diagonal entry of
$\bm{A}$ need not be simple. We have numerically observed a purely exponential tail for
PR and PS whenever $\lambda \neq c\mu$, whereas the NP tail carries an additional
polynomial factor whenever $\lambda \geq c\mu$, which is apparent in the third plot where
the NP curve bends away from the other two. This is in line with the blocking mechanism
of NP, under which fresh updates cannot enter a full pool until the stale updates in
service drain one after another, lending the tail an Erlang-like character that
preemption removes.

Finally, we caution that the favorable ranking of the preemptive policies is tied to the exponentially distributed service times of the present paper. Particularly, for the single-server bufferless system, it is shown in \cite{dogan_akar_tcom21} that when the service time distribution approaches the deterministic distribution, preemption starts to become detrimental, and the non-preemptive policy can outperform its preemptive counterparts; in line with Remark~\ref{rem:insensitivity}, the policy rankings observed here should not be extrapolated to general service time distributions.
\end{example}

\begin{figure}[tb]
\centering
\includegraphics[width=0.9\linewidth]{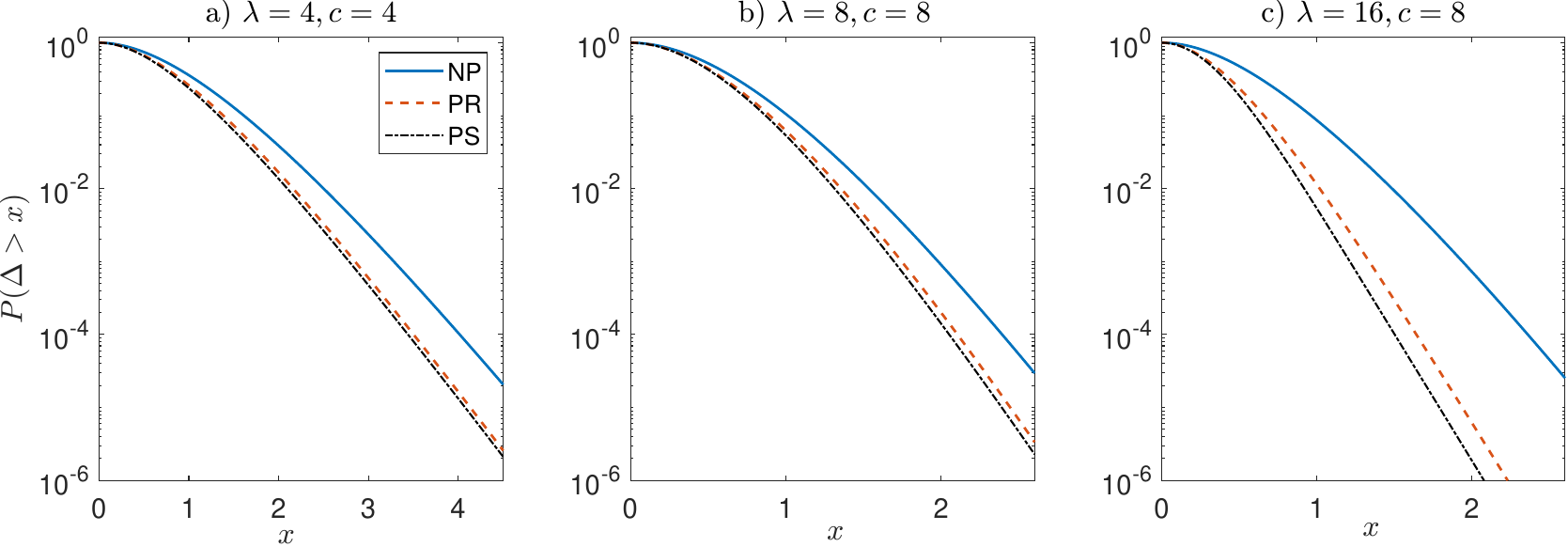}
\caption{Complementary cdf $\mathbb{P}(\Delta > x)$ of the steady-state AoI on a logarithmic scale under the NP, PR, and PS policies, for $\mu=1$, and $(\lambda,c)$ pair is chosen as a) $(4,4)$,  b) $(8,8)$, c) $(16,8)$.}
\label{fig:example1}
\end{figure}

\begin{example}
\label{ex:provisioning}
In this example, we consider the age violation constraint $\mathbb{P}(\Delta > \delta) \leq \epsilon$ as the performance metric of interest. For this purpose, Fig.~\ref{fig:example2}a and Fig.~\ref{fig:example2}b depict the quantity $\mathbb{P}(\Delta > \delta)$ as a function of the number of servers $c$, on a logarithmic scale, for $\mu = 1$, $\delta = 3$, and the update rates $\lambda = 6$ and $\lambda = 12$, respecively. Table~\ref{tab:provisioning} reports the smallest pool size $c^{\ast}$ meeting the constraint for violation tolerances $\epsilon = 10^{-1}, \ldots, 10^{-6}$. Three observations are in order. First, the preemptive policies meet every feasible target with one to three servers fewer than NP. Second, the returns on added servers diminish sharply beyond $c \approx \lambda/\mu$ (vertical dashed lines in Fig.~\ref{fig:example2}).  Third,  as $c \to \infty$, the pool is never full, so that all three policies coincide with the M/M/$\infty$ system studied in \cite{kam_etal_TIT16}. The M/M/$\infty$ floor equals $4.56 \times 10^{-6}$ for $\lambda = 6$ and $2.08 \times 10^{-11}$ for $\lambda = 12$ at $\delta = 3$ (horizontal dotted lines in Fig.~\ref{fig:example2}), towards which all three curves converge. Consequently, the constraint is achievable by server provisioning alone if and only if $\epsilon$ exceeds the floor for $\lambda = 6$, the target $\epsilon = 10^{-6}$ is infeasible for any number of servers, whereas for $\lambda = 12$, it is met with as few as six servers under PR and PS, as reported in Table~\ref{tab:provisioning}. For stricter targets, the update rate $\lambda$, rather than the size of the server pool, is the bottleneck.
\end{example}

\begin{figure}[tb]
\centering
\includegraphics[width=0.9\linewidth]{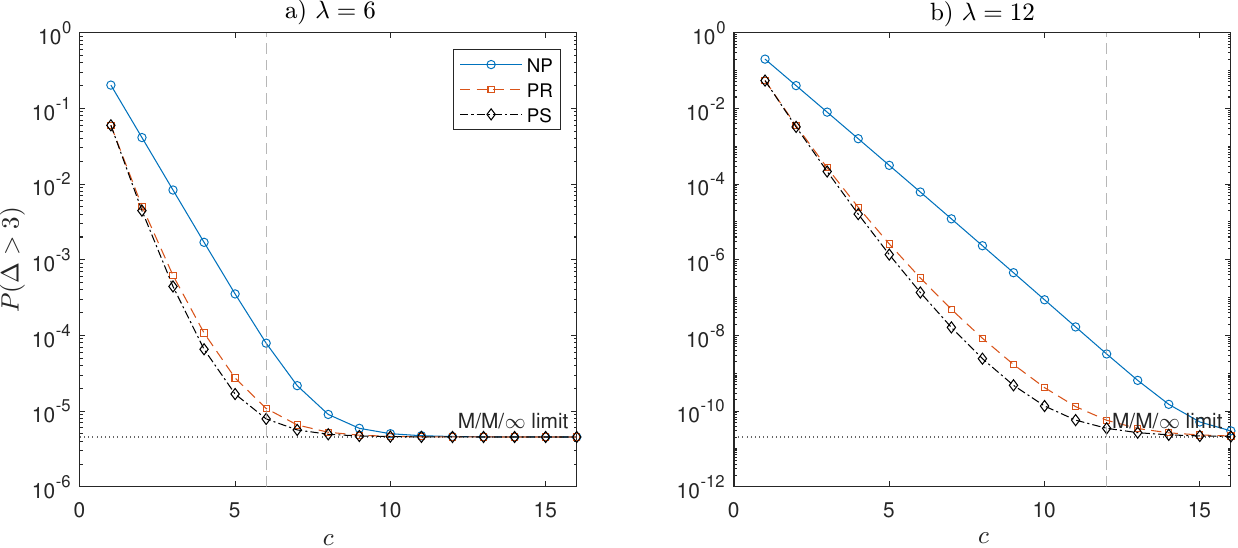}
\caption{Age violation probability $\mathbb{P}(\Delta > \delta)$ vs.\ the number of servers $c$ under the three policies, for $\mu=1$, $\delta = 3$, a) $\lambda = 6$, b) $\lambda = 12$. Horizontal dotted lines mark the M/M/$\infty$ floor and vertical dashed lines mark $c = \lambda/\mu$.}
\label{fig:example2}
\end{figure}

\begin{table}[tb]
\centering
\caption{Smallest number of servers $c^{\ast}$ such that $\mathbb{P}(\Delta > 3) \leq \epsilon$ for $\mu = 1$. The entry ``--'' indicates that the target is infeasible for any number of servers, since $\epsilon$ lies below the M/M/$\infty$ floor.}
\label{tab:provisioning}
\begin{tabular}{c|ccc|ccc}
\toprule
 & \multicolumn{3}{c|}{$\lambda = 6$} & \multicolumn{3}{c}{$\lambda = 12$} \\
$\epsilon$ & NP & PR & PS & NP & PR & PS \\
\midrule
$10^{-1}$ & 2 & 1 & 1 & 2 & 1 & 1 \\
$10^{-2}$ & 3 & 2 & 2 & 3 & 2 & 2 \\
$10^{-3}$ & 5 & 3 & 3 & 5 & 3 & 3 \\
$10^{-4}$ & 6 & 5 & 4 & 6 & 4 & 4 \\
$10^{-5}$ & 8 & 7 & 6 & 8 & 5 & 5 \\
$10^{-6}$ & -- & -- & -- & 9 & 6 & 6 \\
\end{tabular}
\end{table}

\begin{example}
\label{ex:allocation}
We now employ the multi-source model of Section~\ref{sec:multisource} for an update rate allocation problem. Consider $N=2$ sources sharing a pool of $c = 8$ servers with $\mu = 1$, and a total update rate budget $\Lambda$ to be split as $\lambda_1 + \lambda_2 = \Lambda$, $\lambda_n \geq 0$. For a weight parameter $\kappa \in [0,1]$, we seek the allocation minimizing the weighted mean AoI,
\begin{align}
\min_{\lambda_1 + \lambda_2 = \Lambda, \ \lambda_n \geq 0} \ \kappa \, \mathbb{E}[\Delta_1] + (1-\kappa) \, \mathbb{E}[\Delta_2], \label{weightedaoi}
\end{align}
where $\mathbb{E}[\Delta_n]$ denotes the mean AoI of source-$n$, obtained from Theorem~\ref{thm:AoI} with the multi-source AMC. By symmetry, it suffices to study $\kappa \in [1/2, 1)$. Fig.~\ref{fig:example3} depicts the optimal share $\lambda_1^{\ast}/\Lambda$ as a function of $\kappa$ under the three policies, for a) $\Lambda = 8$ b) $\Lambda = 16$. The split is even at $\kappa = 1/2$ and $\lambda_1^{\ast}$ increases with $\kappa$. However, the optimal allocation remains strictly interior for all $\kappa$, i.e., $\lambda_1^{\ast}/\Lambda < \kappa$.  The three policies prescribe visibly different allocations as the load grows: at $\Lambda = 16$ and $\kappa = 0.9$, the optimal shares are $0.788$, $0.764$, and $0.756$ under NP, PR, and PS, respectively, with NP tilting most aggressively towards the prioritized source in order to compensate for blocking. Moreover, PS attains a weighted mean AoI of $0.449$ at this point, about $34\%$ lower than the $0.678$ attained by NP. This example demonstrates that, once the exact multi-source analysis is available, optimization problems including update allocation become computationally feasible.  
For larger numbers of sources, for a fixed total budget, $\mathbb{E}[\Delta_n]$ depends on the rate vector only
through $\lambda_n$ itself. Therefore, the problem \eqref{weightedaoi} generalizes to a
\emph{separable} resource allocation problem  for which computationally efficient algorithms exist \cite{ibaraki_katoh_book,patriksson_ejor08}. A systematic study of such allocation algorithms in the scope of weighted AoI minimization is left for future research.
\end{example}

\begin{figure}[tb]
\centering
\includegraphics[width=0.9\linewidth]{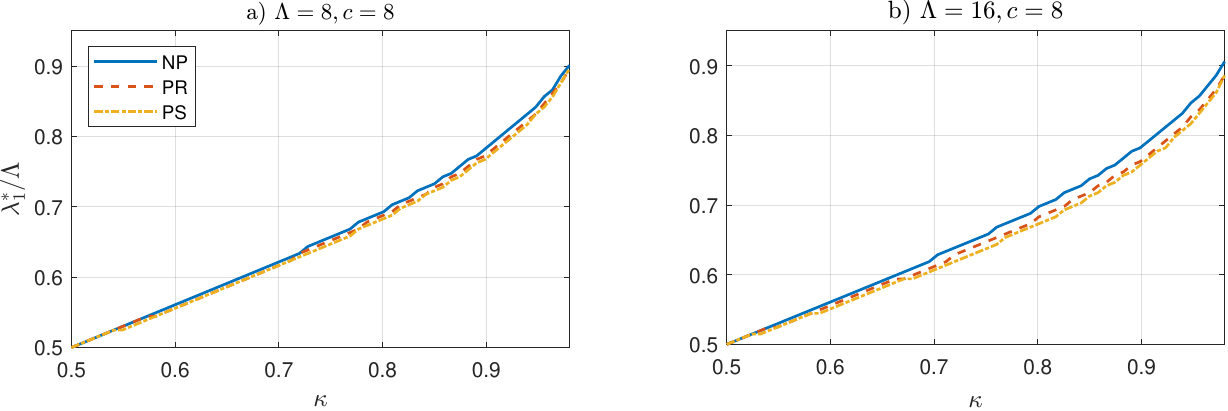}
\caption{Optimal rate share $\lambda_1^{\ast}/\Lambda$ minimizing the weighted mean AoI \eqref{weightedaoi} vs.\ the weight $\kappa$, for two sources sharing $c=8$ servers with $\mu = 1$; left: $\Lambda = 8$, right: $\Lambda = 16$.}
\label{fig:example3}
\end{figure}

\section{Conclusions}
\label{sec:conclusions}

The distributions of AoI and PAoI in a single-source multi-server Erlang loss system are derived using the absorbing Markov chain method for three update management policies which are non-preemptive (NP), preempt at random (PR), and preempt the stalest (PS) policies. For exponentially distributed service times, for a given desired age violation probability, we show that PS requires the fewest number of servers whereas the simple-to-implement NP is the least performing policy. We have also extended the method to model the case of multiple sources under sel-preemption, where the AoI/PAoI processes of a given source are shown to be impacted by the overall update rate of exogenous update traffic from other sources.

Several directions for future research emerge from this work. First, the exact distributions enable optimization formulations beyond the mean, e.g., server provisioning or per-source rate allocation under age violation constraints. Second, the exact analysis of immediate-discarding policies in the multi-source multi-server setting remains open (Remark~\ref{rem:discard}), as does the quantification of the sensitivity of the age distributions to the service time distribution (Remark~\ref{rem:insensitivity}) and the extension to phase-type service times (Remark~\ref{rem:phservice}). Third, allowing cross-source preemption, prioritizing certain sources at the expense of others, calls for a different state description and is left for future work. Finally, our numerical results suggest orderings among the age distributions of NP, PR, and PS; establishing such comparisons in the sense of stochastic orders would be of independent interest.

\bibliographystyle{plain}
\bibliography{bibl}
\end{document}